\newcommand{\paren}[1]          {\left( #1 \right)}
\newcommand{\card}[1]           {\left| #1\right|}
\newcommand{\floor}[1]          {\left\lfloor #1 \right\rfloor}
\newcommand{\dt}                {\Delta t}
\newcommand{\ra}[1]             {\overrightarrow{#1}}
\newcommand{\la}[1]             {\overleftarrow{#1}}
\newcommand{\OF}[1]{\ensuremath{\operatorname{\mathsf{#1}}}}
\newcommand{\A}{\OF{A}\xspace}
\newcommand{\B}{\OF{B}\xspace}
\newcommand{\OPT}{\OF{OPT}\xspace}
\newcommand{\NEQUI}{\OF{N-EQUI}\xspace}
\newcommand{\UCEQ}{\OF{U-CEQ}\xspace}
\newcommand{\NE}{\OF{NE}\xspace}
\newcommand{\UC}{\OF{UC}\xspace}
\newcommand{\PF}{\OF{WCEP}\xspace}
\newcommand{\MULTI}{\OF{MultiLaps}\xspace}
\newcommand{\SJF}{\OF{SJF}\xspace}
\newcommand{\algref}[1]         {Algorithm~\ref{alg:#1}}
\newcommand{\secref}[1]         {Section~\ref{sec:#1}}
\newcommand{\tabref}[1]         {Table~\ref{tab:#1}}
\newcommand{\thmref}[1]         {Theorem~\ref{thm:#1}}
\newcommand{\sceref}[1]         {Scenario~\ref{sce:#1}}
\newcommand{\thmreftwo}[2]         {Theorems~\ref{thm:#1} and~\ref{thm:#2}}
\newcommand{\lemref}[1]         {Lemma~\ref{lem:#1}}
\newcommand{\lemreftwo}[2]         {Lemmas~\ref{lem:#1} and~\ref{lem:#2}}
\newcommand{\lemrefthree}[3]         {Lemmas~\ref{lem:#1},~\ref{lem:#2} and~\ref{lem:#3}}
\renewcommand{\eqref}[1]          {Equation~(\ref{eq:#1})}
\newcommand{\eqreff}[1]          {Equations~(\ref{eq:#1})}
\newcommand{\eqreftwo}[2]          {Equations~(\ref{eq:#1}) and~(\ref{eq:#2})}
\newcommand{\ineqref}[1]        {Inequality~(\ref{ineq:#1})}
\newcommand{\ineqreff}[1]        {Inequalities~(\ref{ineq:#1})}
\newcommand{\calset}[1]{\ensuremath{\mathop{\mathcal{#1}}\nolimits}}
\newcommand{\jobset}{\calset{J}}
\newtheorem{theorem}            {Theorem}
\newtheorem{lemma}     {Lemma}
\newtheorem{scenario}     {Scenario}
\newenvironment{proof}      {\noindent{\em Proof.}\hspace{1em}}{\qed}
\def\squarebox#1{\hbox to #1{\hfill\vbox to #1{\vfill}}}
\newcommand{\qedbox}            {\vbox{\hrule\hbox{\vrule\squarebox{.667em}\vrule}\hrule}}
\newcommand{\qed}               {\nopagebreak\mbox{}\hfill\qedbox\smallskip}
\begin{document}

\title{Energy-Efficient Multiprocessor Scheduling for Flow Time and Makespan}

\author{Hongyang Sun\thanks{School of Computer Engineering, Nanyang Technological University, Singapore. {\tt sunh0007@ntu.edu.sg}} \and Yuxiong He\thanks{Microsoft Research, Redmond, WA, USA. {\tt yuxhe@microsoft.com}} \and Wen-Jing Hsu\thanks{School of Computer Engineering, Nanyang Technological University, Singapore. {\tt hsu@ntu.edu.sg}} \and Rui Fan\thanks{School of Computer Engineering, Nanyang Technological University, Singapore. {\tt fanrui@ntu.edu.sg}}}

\date{}

\maketitle

\begin{abstract}
We consider energy-efficient scheduling on multiprocessors, where the speed of each processor can
be individually scaled, and a processor consumes power $s^{\alpha}$ when running at speed $s$, for
$\alpha>1$. A scheduling algorithm needs to decide at any time both processor allocations and
processor speeds for a set of parallel jobs with time-varying parallelism. The objective is to
minimize the sum of the total energy consumption and certain performance metric, which in this
paper includes total flow time and makespan. For both objectives, we present instantaneous
parallelism-clairvoyant (IP-clairvoyant) algorithms that are aware of the instantaneous parallelism
of the jobs at any time but not their future characteristics, such as remaining parallelism and
work. For total flow time plus energy, we present an $O(1)$-competitive algorithm, which
significantly improves upon the best known non-clairvoyant algorithm and is the first constant
competitive result on multiprocessor speed scaling for parallel jobs. In the case of makespan plus
energy, which is considered for the first time in the literature, we present an
$O(\ln^{1-1/\alpha}P)$-competitive algorithm, where $P$ is the total number of processors. We show
that this algorithm is asymptotically optimal by providing a matching lower bound. In addition, we
also study non-clairvoyant scheduling for total flow time plus energy, and present an algorithm
that achieves $O(\ln P)$-competitive for jobs with arbitrary release time and
$O(\ln^{1/\alpha}P)$-competitive for jobs with identical release time.  Finally, we prove an
$\Omega(\ln^{1/\alpha}P)$ lower bound on the competitive ratio of any non-clairvoyant algorithm,
matching the upper bound of our algorithm for jobs with identical release time.
\\

\noindent \textbf{Keywords:} Multiprocessors, Online Scheduling, Dynamic speed scaling,
Energy-performance tradeoff, Competitive Analysis, Total flow time, Makespan
\end{abstract}

\section{Introduction}

Energy has been widely recognized as a key consideration in the design of mobile and
high-performance computing systems.  One popular approach to controlling the energy consumption is
by dynamically varying the speeds of the processors, a technique generally known as \emph{dynamic
speed scaling} \cite{BrooksBoSc00,GrunwaldMoLe01,WeiserWeDe94}. Major chip manufacturers, such as
Intel, AMD and IBM, have produced chips that enable the operating systems to perform dynamic power
management using this technology. It has been observed that, for most CMOS-based processors, the
dynamic power consumption satisfies the \emph{cube-root} rule; that is, the power consumption of a
processor is proportional to $s^3$ when it runs at speed $s$ \cite{BrooksBoSc00,Mudge01}. Since the
seminal paper by Yao, Demers and Shenker \cite{YaoDeSh95}, who initiated the theoretical
investigation of energy-efficient scheduling, most algorithmic researchers have assumed a more
general power function of $s^\alpha$, where $\alpha > 1$ is called the \emph{power parameter}. As
this power function is strictly convex, using dynamic speed scaling can result in a non-linear
tradeoff between the energy consumption and the performance, and this has led to many interesting
new research problems. One challenging problem concerns how to balance the conflicting objectives
of low energy and high performance. The problem has attracted much attention among the algorithmic
community and has become an active research topic in recent years. (See \cite{Albers09,IraniPr05}
for two surveys of the field.)

In this paper, we study the challenging problem of scheduling parallel jobs on multiprocessors for
the energy-performance tradeoff. We focus on systems with per-processor speed scaling capability;
that is, the speed of each processor can be individually scaled
\cite{HerbertMa07,ZhangShDw10,ZhaoJa11}. This kind of architecture has been made possible by the
recent advancements in chip design technology, such as the on-chip switching regulators
\cite{KimGuWe08,KimBrWe11}. Under this setting, a scheduling algorithm needs to have both a
\emph{processor allocation policy}, which determines the number of processors allocated to each
job, and a \emph{speed scaling policy}, which determines the speed of each allocated processor.
Moreover, we assume that the parallel jobs can have time-varying parallelism in different phases of
their executions \cite{Edmonds00,ChanEdPr11,SunCaHs09b}. This poses an additional challenge
compared to scheduling sequential jobs. In particular, it requires a scheduling algorithm to have
dynamic policies in order to respond to the jobs' different resource requirements over time. If not
designed properly, however, the algorithm could waste a large amount of energy or cause severe
execution delays and hence performance degradations.

Our objective is to minimize a linear combination of energy consumption and certain performance
metric, which in this paper includes total flow time and makespan. The \emph{flow time} of a job is
the duration between its release time and completion, and the \emph{total flow time} is the sum of
the flow time of all the jobs in the system. The \emph{makespan} is the largest completion time of
the jobs. Both total flow time and makespan are widely used performance metrics: The former
measures the average response time of all users in the system, and the latter is closely related to
the throughput of the system. Although energy and flow time (or makespan) have different units,
optimizing a linear combination of the two has a natural interpretation if we consider a user who
is willing to spend one unit of energy in order to reduce $\rho$ units of total flow time (or
makespan)\footnote{By scaling the units of time and energy, we can assume without loss of
generality that $\rho = 1$.}. In fact, minimizing the sum of conflicting objectives has been a
common practice in many bi-criteria optimization problems \cite{AlbersFu07,LamLeTo08c}, and similar
metrics have been considered previously in the scheduling literature that combine both performance
and the cost of scheduling into a single objective function \cite{Trick90,ShmoysTa93,Chen04}.

Since Albers and Fujiwara \cite{AlbersFu07} first considered the problem of minimizing total flow
time plus energy, many results (e.g., \cite{BansalPrSt09, LamLeTo08, LamLeTo08c, BansalChPr09,
ChanEdLa11, ChanEdPr11, SunCaHs09b, GreinerNoSo09, AndrewWiTa09, AndrewLiWi10}) have been obtained
under different online scheduling settings. Some of these results assume that the scheduling
algorithm is \emph{clairvoyant}; that is, it gains complete knowledge of all job characteristics
immediately upon the job's arrival.  Other results are for an arguably more practical
\emph{non-clairvoyant} setting, where the scheduler knows nothing about the un-executed portion of
a job. Most of these results, however, are only applicable to scheduling sequential jobs.  Also, to
the best of our knowledge, no previous work has considered minimizing makespan plus energy. The
closest result to ours is by Chan, Edmonds and Pruhs \cite{ChanEdPr11}, who studied non-clairvoyant
scheduling for parallel jobs on multiprocessors to minimize total flow time plus energy. In both
\cite{ChanEdPr11} and our previous work \cite{SunCaHs09b}, it has been observed that any
non-clairvoyant algorithm that allocates a set of uniform-speed processors to a job will perform
poorly; in particular, a lower bound of $\Omega(P^{(\alpha-1)/\alpha^2})$ on the competitiveness
has been shown for any such algorithm, where $P$ is the total number of processors. The reason is
because a non-clairvoyant algorithm may in the worst case allocate a ``wrong" number of processors
to a job as compared to its parallelism, which will lead to either wasted energy or delayed job
execution.

To obtain a better competitive ratio, it turns out that a non-clairvoyant algorithm needs to assign
processors of different speeds to a job. To this end, Chan, Edmonds and Pruhs \cite{ChanEdPr11}
proposed an execution model, in which a job can be simultaneously executed by several groups of
processors. The processors within the same group must share the same speed, but different groups
can run at different speeds. The execution rate of the job at any time is determined by the group
with the fastest speed\footnote{In practice, this can be implemented by proper checkpointing of the
executing program.}. They proposed a non-clairvoyant algorithm called \MULTI, and showed that it is
$O(\log P)$-competitive with respect to total flow time plus energy for any set of parallel jobs.
They also gave an $\Omega(\log^{1/\alpha}P)$ lower bound on the competitive ratio of any
non-clairvoyant algorithm under this execution model.

In this paper, we first propose an alternative execution model, under which only one group of
processors, possibly with different speeds, can be allocated to a job at any time. The execution
rate of the job is determined by the speeds of the fastest processors that can be effectively
utilized. This model is based on the assumption that the \emph{maximum utilization policy}
\cite{Jaffe80a,BenderRa00} is employed at the underlying task scheduling level, which always
utilizes faster processors before slower ones. Compared to the execution model proposed in
\cite{ChanEdPr11}, our model may be implemented more easily especially for data-parallel jobs with
independent and sufficiently long tasks. Our first contribution includes a non-clairvoyant
scheduling algorithm and its analysis under this execution model. The following states our results:

\begin{itemize}
\item We propose a non-clairvoyant algorithm \NEQUI (Non-uniform Equi-partitioning), and show that it
  is $O(\ln P)$-competitive with respect to the total flow time plus
  energy for any set of parallel jobs with arbitrary release time,
  and $O(\ln^{1/\alpha}P)$-competitive for jobs with identical release time. Moreover, we prove that any non-clairvoyant algorithm is
  $\Omega(\ln^{1/\alpha}P)$-competitive under our execution model,
  showing that \NEQUI is asymptotically optimal in the batch-released
  setting.\footnote{It is interesting to observe that \NEQUI and \MULTI achieve the same asymptotic competitive ratio under two different execution models that are not clearly related.}
\end{itemize}

Another contribution of this paper is to study a setting that lies between clairvoyance and
non-clairvoyance. In this intermediate setting, a scheduling algorithm is allowed to know the
available parallelism, or the \emph{instantaneous parallelism (IP)}, of a job at any given time.
The future characteristic of the job, such as its remaining parallelism or work, is still unknown.
We call such an algorithm \emph{IP-clairvoyant}\footnote{This is to be distinguished from
\emph{semi-clairvoyant} scheduling \cite{BecchettiLeMa04}, which is another intermediate setting
that assumes a scheduling algorithm is able to gain approximate knowledge of a job upon its
arrival, such as an estimate of its total work, but not the job's exact information.}. In many
parallel systems using centralized task queues or thread pools, instantaneous parallelism is simply
the number of ready tasks in the queue or the number of ready threads in the pool, which is
information practically available to the scheduler. Even for parallel systems using distributed
scheduling such as work-stealing \cite{BlumofeLe99}, instantaneous parallelism can be collected or
estimated through counting or sampling without introducing much system overhead. It was shown
previously that, when minimizing total flow time alone, knowledge about the instantaneous
parallelism of the jobs provides limited benefit when compared to non-clairvoyant algorithms
\cite{DengGuBr00,EdmondsChBr03,KalyanasundaramPr00,EdmondsPr09}. However, we show in this paper
that IP-clairvoyance can bring significant performance improvements when it comes to minimizing
total flow time plus energy. Our contribution in this setting includes the following results:

\begin{itemize}
\item We present an IP-clairvoyant algorithm \UCEQ (Uniform Conservative Equi-partitioning), and show that it
  is $\paren{\max\{\frac{4\alpha^2}{\alpha-1}, 4^{\alpha}\alpha\} +
    2\alpha}$-competitive with respect to total flow time plus
  energy for any set of parallel jobs with arbitrary release
  time. This competitive ratio is independent of the total number $P$
  of processors, and therefore can be considered as constant for a
  fixed power parameter $\alpha$. In addition, we show that \UCEQ is
  $(2^{2-1/\alpha} + 2)$-competitive for any set of parallel jobs with identical release time.
\end{itemize}

\begin{table}[t]\label{tab:flow}
\caption{Competitive ratios of our non-clairvoyant and IP-clairvoyant algorithms for parallel jobs
with arbitrary and identical release time for total response time plus energy.} \centering
\begin{tabular}{| c | c | c |}
\hline
& Non-clairvoyant & IP-clairvoyant \\
\hline
Arbitrary release time & $O(\ln P)$ & $O(1)$ \\
\hline
Identical release time & $\Theta(\ln^{1/\alpha}P)$ & $2^{2-1/\alpha} + 2$ \\
\hline
\end{tabular}
\end{table}

\tabref{flow} summarizes the competitive ratios of our algorithms under both non-clairvoyant and
IP-clairvoyant settings. Compared to any non-clairvoyant algorithm, our IP-clairvoyant algorithm
achieves significantly better competitive ratios, and in particular it gives the first constant
competitive result on multiprocessor speed scaling for parallel jobs.  The reason for the
improvement comes from the fact that, given the instantaneous parallelism, an IP-clairvoyant
algorithm can now allocate a ``right" number of processors to a job at any time, ensuring that no
energy will be wasted. At the same time, it can also guarantee a sufficient execution rate by
setting the total power consumption proportionally to the number of active jobs at any time. This
has been a common practice to designing online scheduling algorithms for total flow time plus
energy and intuitively it provides the optimal balance between energy and performance
\cite{BansalPrSt09,LamLeTo08,BansalChPr09,ChanEdLa11}. Moreover, unlike the non-clairvoyant
algorithms \MULTI and \NEQUI, both of which require non-uniform speed scaling for an individual
job, \UCEQ only requires allocating processors of uniform speed to a job. Thus, in situations where
the instantaneous parallelism of a job does not change frequently and can be effectively measured,
e.g., by using feedback mechanisms \cite{AgrawalLeHe08,HeHsLe08,SunCaHs11}, our IP-clairvoyant
algorithm may be easier to implement and more practical.

Besides minimizing total flow time plus energy, there have been some recent studies that focus on
the weighted variant of this problem \cite{ChanLaLe10}, or optimize a linear combination of energy
and some other performance metrics, such as total profit \cite{PruhsSt10} and quoted lead time
\cite{ChanLaLi11}. In this paper, we introduce a new objective function of minimizing makespan plus
energy. Unlike the previous metrics, where the completion time of each job contributes to the
overall objective function, makespan is only determined by the completion time of the last job in a
job set, while the other jobs only contribute to the energy consumption part of the objective, and
therefore can be slowed down to improve the overall performance. However, without knowing the
future characteristics of the jobs, such as their remaining work, it is not clear even in the
IP-clairvoyant setting which jobs should be slowed down in order to reduce energy without affecting
the makespan. In the preliminary version \cite{SunHeHs11} of this paper, we proposed an
IP-clairvoyant algorithm that works for parallel jobs with identical release time and that consist
of sequential phases and fully parallelizable phases up to all $P$ processors. In this paper, we
develop a generalized strategy that works for any set of parallel jobs regardless of their release
time and parallelism structure.
The following shows our contribution for minimizing makespan plus energy:

\begin{itemize}
\item We present an IP-clairvoyant algorithm \PF (Work-Conserving Equal-Power) and show that it is
  $O(\ln^{1-1/\alpha}P)$-competitive with respect to makespan plus
  energy for any set of parallel jobs regardless of their release
  time, where $P$ is the total number of processors. Moreover, we
  give a matching $\Omega(\ln^{1-1/\alpha}P)$ lower bound on the
  competitive ratio of any IP-clairvoyant algorithm, showing that \PF is
  asymptotically optimal.
\end{itemize}

Finally, compared to minimizing total flow time plus energy, where a common strategy is to set the
total power consumption at any time proportionally to the number of active jobs
\cite{BansalPrSt09,BansalChPr09,ChanEdLa11,LamLeTo08}, our results indicate that a good strategy
for minimizing makespan plus energy is to set a constant power consumption at all time. In fact,
both strategies share the same principle of balancing the costs incurred from both the power
consumption and the target performance metric during the jobs' executions.

The rest of this paper is organized as follows. \secref{models} formally defines the models and the
objective functions. \secref{flow} presents our algorithms and analysis in both non-clairvoyant and
IP-clairvoyant settings for the objective of total flow time plus energy. \secref{makespan}
presents our IP-clairvoyant algorithm for minimizing makespan plus energy. Finally,
\secref{discussion} concludes the paper with some discussions and future directions.

\section{Models and Objective Functions}\label{sec:models}

We consider a set $\jobset =\{J_1, J_2, \cdots, J_n\}$ of $n$ jobs with time-varying parallelism to
be scheduled on $P$ processors whose speeds can be individually scaled. The power consumption of a
processor running at speed $s$ is given by $s^{\alpha}$, where $s$ can take any value in $[0,
\infty)$ and $\alpha > 1$ is the \emph{power parameter}. Adopting the notations used previously in
\cite{EdmondsChBr03,Edmonds00,EdmondsPr09,ChanEdPr11}, each job $J_i\in \jobset$ contains $k_i$
phases $\langle J_i^1, J_i^2, \cdots, J_i^{k_i} \rangle$, and each phase $J_i^{k}$ is represented
by an ordered pair $\langle w_i^{k}, h_i^{k}\rangle$, where $w_i^{k} \in \mathbb{R^+}$ denotes the
amount of \emph{work} and $h_i^{k} \in \mathbb{Z}^+$ denotes the \emph{parallelism} of the
phase\footnote{In \cite{EdmondsChBr03,Edmonds00,EdmondsPr09,ChanEdPr11}, an arbitrary
  non-decreasing and sub-linear speedup function is specified for each
  phase instead of a parallelism value, which represents a more
  general model for the jobs. For any non-clairvoyant algorithm,
  however, it was shown that the simple model used in this paper gives
  the hardest job instances for the combined objective of performance
  and energy \cite{ChanEdPr11}.}. Since a job can receive at most $P$ processors at any time, it
does not benefit by having a larger parallelism value than $P$. Hence, we can assume without loss
of generality that $h_i^{k} \le P$. A phase $J_i^{k}$ is said to be \emph{fully-parallelizable} if
$h_i^k = P$ and it is \emph{sequential} if $h_i^k = 1$. For convenience, we also define $x_i^k =
w_i^k/(h_i^k)^{1-1/\alpha}$ to be the \emph{unit-power span} for each phase $J_i^k$. This
represents the time to complete the phase using exactly $h_i^k$ processors of the same speed with a
total power of 1 at all time. Suppose the $h_i^k$ processors have the same speed $s$, we have
$h_i^k s^\alpha = 1$, and so $s = (h_i^k)^{-\frac{1}{\alpha}}$. The amount of time to complete
$w_i^k$ amount of work is thus given by $w_i^k / (s \cdot h_i^k) = x_i^k$. For each job $J_i$, let
$w(J_i) = \sum_{k=1}^{k_i}w_i^k$ denote its \emph{total work} and let $x(J_i) =
\sum_{k=1}^{k_i}x_i^k$ denote the job's \emph{total unit-power span}.

Suppose that at some time $t$ job $J_i$ is in its $k$'th phase hence has parallelism $h_i^k$, and
it is allocated $a_i(t)$ processors possibly with different speeds. Since a job cannot utilize more
processors than its parallelism, its \emph{effective processor allocation} at time $t$ is given by
$\bar{a}_i(t) = \min\{a_i(t), h_i^k\}$. The execution of the job is assumed to follow the
\emph{maximum utilization policy} \cite{Jaffe80a,BenderRa00}, which always utilizes faster
processors before slower ones until all the allocated processors are utilized or the number of
utilized processors reaches the parallelism of the job. In particular, let $s_{ij}(t)$ denote the
speed of the $j$'th processor allocated to job $J_i$ at time $t$, and we can assume without loss of
generality that $s_{i1}(t) \ge s_{i2}(t) \ge \cdots \ge s_{ia_i(t)}(t)$. Then, only the
$\bar{a}_i(t) = \min\{a_i(t), h_i^k\}$ fastest processors are utilized, and the \emph{execution
rate} $\Gamma_{i}^{k}(t)$ of the job is given by $\Gamma_{i}^{k}(t) =
\sum_{j=1}^{\bar{a}_i(t)}s_{ij}(t)$. In the case where all the processors allocated to job $J_i$
share the same speed $s_i(t)$, the execution rate is then simply $\Gamma_{i}^{k}(t) =
\bar{a}_i(t)s_{i}(t)$.

At any time $t$, a scheduling algorithm needs to specify the number $a_i(t)$ of processors
allocated to each job $J_i$, as well as the speed of each allocated processor. In this paper, we
study two types of algorithms. An algorithm is said to be \emph{non-clairvoyant} if it makes both
scheduling decisions without any current or future information about a job, such as its release
time, parallelism profile and remaining work. If an algorithm is aware of the current, or
\emph{instantaneous parallelism} of the job at any time but not its remaining work and parallelism,
the algorithm is said to be \emph{IP-clairvoyant}.

In any valid schedule, we require the total processor allocation at any time to be at most the
total number of available processors, i.e., $\sum_{i=1}^{n}a_i(t)\le P$. Let $r_i$ denote the
\emph{release time} of job $J_i$. If all jobs are released together, their release time can be
assumed to be all $0$. Otherwise, we can assume without loss of generality that the first released
job arrives at time $0$. Let $c_i$ denote the completion time of job $J_i$, and let $c_i^k$ denote
the completion time of phase $J_i^k$. We also require that a valid schedule cannot begin to execute
a phase of a job unless it has completed all its preceding phases, i.e., $r_i = c_{i}^{0} <
c_{i}^{1} < \cdots < c_{i}^{k_i} = c_i$, and $\int_{c_{i}^{k-1}}^{c_{i}^{k}}\Gamma_i^{k}(t) dt =
w_i^{k}$ for all $1 \le k \le k_i$.

The \emph{flow time} $f_{i}$ of any job $J_{i}$ is the duration between its completion and release,
i.e., $f_{i} = c_{i} - r_{i}$. The \emph{total flow time} $F(\jobset)$ of all jobs in $\jobset$ is
given by $F(\jobset) = \sum_{i=1}^{n}f_{i}$.  The \emph{makespan} $M(\jobset)$ is the completion
time of the last completed job, i.e., $M(\jobset) = \max_{i=1,\cdots, n}c_i$. Job $J_i$ is said to
be \emph{active} at time $t$ if it is released but not completed at $t$, i.e., $r_i \le t \le c_i$.
An alternative expression for the total flow time is $F(\jobset) = \int_{0}^{\infty}n_tdt$, where
$n_t$ is the number of active jobs at time $t$.
Let $u_i(t)$ denote the power consumed by job
  $J_i$ at time $t$, i.e., $u_i(t) =
  \sum_{j=1}^{a_i(t)}s_{ij}(t)^{\alpha}$. The overall energy
  consumption $e_i$ of the job is given by $e_i
  =\int_{0}^{\infty}u_i(t)dt$, and the total energy consumption
  $E(\jobset)$ of the job set is $E(\jobset) = \sum_{i=1}^{n}e_i$, or
  alternatively $E(\jobset) = \int_{0}^{\infty}u_tdt$, where $u_t =
  \sum_{i=1}^{n}u_i(t)$ denotes the total power consumption of all
  jobs at time $t$. In this paper, we consider total flow time plus
  energy $G(\jobset)$ and makespan plus energy $H(\jobset)$ of the job
  set, i.e., $G(\jobset) = F(\jobset) + E(\jobset)$ and $H(\jobset) =
  M(\jobset) + E(\jobset)$. The objective is to minimize either
  $G(\jobset)$ or $H(\jobset)$.

We use \emph{competitive analysis} \cite{BorodinEl98} to evaluate an online scheduling algorithm by
comparing its performance with that of an optimal offline scheduler. An online algorithm \A is said
to be \emph{$c_1$-competitive} with respect to total flow time plus energy if it satisfies
$G_{\A}(\jobset) \le c_1\cdot G_{\OPT}(\jobset)$ for any job set $\jobset$, where
$G_{\OPT}(\jobset)$ denotes the total flow time plus energy of $\jobset$ under an optimal offline
scheduler. Similarly, an online algorithm \B is said to be \emph{$c_2$-competitive} with respect to
makespan plus energy if for any job set $\jobset$ we have $H_{\B}(\jobset) \le c_2 \cdot
H_{\OPT}(\jobset)$, where $H_{\OPT}(\jobset)$ denotes the makespan plus energy of the job set under
an optimal offline scheduler.

\section{Total Flow Time Plus Energy}\label{sec:flow}

We consider the objective of total flow time plus energy in this section. We first present a
non-clairvoyant algorithm \NEQUI and analyze its performances for jobs with both arbitrary release
time and the same release time. We then derive a lower bound on the competitive ratio of any
non-clairvoyant algorithm. Finally, we present an IP-clairvoyant algorithm \UCEQ and show that it
significantly improves upon any non-clairvoyant algorithm.

\subsection{Preliminaries}

We first derive two lower bounds on the total flow time plus energy of any job set, which allows us
to bound the performance of our online algorithms through indirect comparisons instead of comparing
directly to the optimal offline scheduler. We then introduce some useful notations, and outline the
analysis techniques used to prove the competitiveness of our online algorithms.

\subsubsection{Lower Bounds on Total Flow Time plus Energy}

Without loss of generality, we assume that the jobs in a job set $\jobset$ are renamed in
non-increasing order of total work, i.e., $w(J_1) \ge w(J_2) \ge \cdots \ge w(J_n)$. The following
lemma gives two lower bounds $G_1^*(\jobset)$ and $G_2^*(\jobset)$ on the total flow time plus
energy of job set $\jobset$. Note that the second lower bound only applies to a set of jobs with
identical release time; an algorithm may incur a smaller total flow time plus energy than
$G_2^*(\jobset)$ if the jobs in $\jobset$ have arbitrary release time.

\begin{lemma}\label{lem:lowerbounds}
The total flow time plus energy of any job set $\jobset$ satisfies the following two lower bounds,
i.e., $G_{\OPT}(\jobset) \ge \max\{G_1^*(\jobset), G_2^*(\jobset)\}$,
\begin{eqnarray}
G_1^*(\jobset) &=& \frac{\alpha}{(\alpha-1)^{1-1/\alpha}}\sum_{i=1}^{n}x(J_i), \\
G_2^*(\jobset) &=& \frac{\alpha}{\paren{(\alpha-1)P}^{1-1/\alpha}}\sum_{i=1}^{n}i^{1-1/\alpha}\cdot
w(J_i),
\end{eqnarray}
where $x(J_i)$ and $w(J_i)$ denote the unit-power span and the total work of job $J_i$,
respectively, and $P$ is the total number of processors. The second lower bound $G_2^*(\jobset)$
only applies to jobs with identical release time.
\end{lemma}

\begin{proof}
To derive the first lower bound, consider any phase $J_i^k$ of job $J_i$. The optimal scheduler
will only perform better if there is an unlimited number of processors at its disposal. In this
case, it will allocate $a$ processors of the same speed, say $s$, to the phase throughout its
execution, since the convexity of the power function implies that if different speeds are used,
then averaging the speeds will result in the same execution rate but consuming less energy
\cite{YaoDeSh95}. Moreover, we have $a \le h_i^k$, since allocating more processors to a phase than
its parallelism will incur more energy without improving flow time. The flow time plus energy
introduced by the execution of $J_i^k$ is then given by $\frac{w_i^k}{as} + \frac{w_i^k}{as}\cdot
as^{\alpha} = w_i^k\paren{\frac{1}{as} +
  s^{\alpha-1}} \ge \frac{\alpha}{(\alpha-1)^{1-1/\alpha}}\cdot
\frac{w_i^k}{a^{1-1/\alpha}} \ge \frac{\alpha}{(\alpha-1)^{1-1/\alpha}}\cdot
\frac{w_i^k}{(h_i^k)^{1-1/\alpha}} = \frac{\alpha}{(\alpha-1)^{1-1/\alpha}}\cdot x_i^k$. Extending
this property over all phases and all jobs gives the first lower bound.

For the second lower bound, the optimal offline scheduler will do no worse in terms of total flow
time plus energy if each job in the job set is replaced by a simpler job that contains a single
fully-parallelizable phase with work $w_i$, since the original optimal schedule is also a valid
schedule for the new job set. Also, because the jobs in the job set are assumed to have the same
release time, it is well-known that the optimal offline scheduler will execute them using the \SJF
(Shortest Job First) policy, since otherwise the total flow time can be reduced by swapping the
jobs without affecting the energy consumption. Moreover, for each job $J_i$, the optimal offline
scheduler will allocate all $P$ processors the same speed, say $s$, throughout its execution, by
the same argument as in the proof of the first lower bound. The flow time plus energy introduced by
the execution of $J_i$ is then given by $\frac{w(J_i)}{Ps}\cdot i + \frac{w(J_i)}{Ps}\cdot
Ps^{\alpha} = w(J_i)\paren{\frac{i}{Ps} +
  s^{\alpha-1}} \ge \frac{\alpha}{(\alpha-1)^{1-1/\alpha}}\cdot
\frac{i^{1-1/\alpha}\cdot w(J_i)}{P^{1-1/\alpha}}$. Summing the inequality over all jobs gives the
second lower bound.
\end{proof}

\subsubsection{Concepts and Notations}\label{sec:notations}

We define some useful concepts and notations in this subsection in order to analyze the performance
of any online algorithm \A.

First, let $\jobset(t)$ and $\jobset^*(t)$ denote the sets of active jobs at any time $t$ scheduled
by an online algorithm \A and the optimal offline algorithm, respectively. Since an online
algorithm and the optimal algorithm may schedule the same job set differently, $\jobset(t)$ and
$\jobset^*(t)$ can be different from each other at any time instance. We define
$\frac{dG_{\A}(\jobset(t))}{dt}$ to be the \emph{instantaneous cost} of the online algorithm \A
with respect to the total flow time plus energy at time $t$, and define
$\frac{dG_{\OPT}(\jobset^*(t))}{dt}$ to be the instantaneous cost of the optimal offline algorithm
at time $t$. Since the online algorithm contains $n_t$ active jobs and consumes $u_t$ power at time
$t$, its instantaneous cost is given by $\frac{dG_{\A}(\jobset(t))}{dt} = n_t + u_t$, and its total
flow time plus energy for the entire job set can be obtained by integrating the above instantaneous
cost over time, i.e., $G_A(\jobset) = \int_{0}^{\infty}\frac{dG_{\A}(\jobset(t))}{dt} dt =
\int_{0}^{\infty} (n_t + u_t) dt$. Similarly, we have $\frac{dG_{\OPT}(\jobset^*(t))}{dt} = n_t^* +
u_t^*$ for the optimal offline algorithm, where $n_t^*$ and $u_t^*$ denote the number of active
jobs and the power consumption under the optimal algorithm at time $t$. The total flow time plus
energy incurred by the optimal offline algorithm is then given by $G_{\OPT}(\jobset) =
\int_{0}^{\infty}\frac{dG_{\OPT}(\jobset^*(t))}{dt} dt = \int_{0}^{\infty} (n^*_t + u^*_t) dt$.

Now, let us define the notions of $t$-prefix and $t$-suffix. Specifically, the \emph{t-prefix}
$J_i(\la{t})$ of a job $J_i$ is defined as the portion of the job scheduled by the online algorithm
$\A$ no later than time $t$, and the \emph{t-suffix} $J_i(\ra{t})$ is the portion of job $J_i$
scheduled by algorithm $\A$ after time $t$. Moreover, we can extend the notions of $t$-prefix and
$t$-suffix from an individual job to a job set as follows. The $t$-prefix of a job set $\jobset$
scheduled by the online algorithm $\A$ is defined as $\jobset(\la{t}) = \{J_{i}(\la{t}): J_{i}\in
\jobset \mbox{ and } r_{i}\le t\}$ and the $t$-suffix of job set $\jobset$ is defined as
$\jobset(\ra{t}) = \{J_{i}(\ra{t}): J_{i}\in \jobset \mbox{ and } r_{i}\le t\}$.
With the help of these notions, we define $\frac{dG_1^*(\jobset(t))}{dt} =
\frac{G_1^*(\jobset(\la{t+\dt})) - G_1^*(\jobset(\la{t}))}{\dt}$ and $\frac{dG_2^*(\jobset(t))}{dt}
= \frac{G_2^*(\jobset(\la{t+\dt})) - G_2^*(\jobset(\la{t}))}{\dt}$ to be the rates of change at any
time $t$ for the two lower bounds presented in \lemref{lowerbounds}, where $\dt$ represents an
infinitesimally small interval of time during which no job arrives or completes. Note that these
two rates of change are defined with respect to the $t$-prefix, or the completed portion, of the
job set $\jobset$ scheduled by the online algorithm $\A$. From the definitions of the two lower
bounds, we can see that $\frac{dG_1^*(\jobset(t))}{dt}$ and $\frac{dG_2^*(\jobset(t))}{dt}$ are
essentially determined by how much work is done and how much unit-power span is completed for the
jobs at time $t$ under the online algorithm \A. For instance, if algorithm \A does not schedule any
job at time $t$, the $t$-prefix of the job set will not change, i.e., $\jobset(\la{t+\dt}) =
\jobset(\la{t})$, and as a result we will have $\frac{dG_1^*(\jobset(t))}{dt} =
\frac{dG_2^*(\jobset(t))}{dt} = 0$. If an online algorithm always schedules some job at all times,
the following lemma relates the two rates of change to the two lower bounds.

\begin{lemma}\label{lem:rateofchange}
For any job set $\jobset$ scheduled by an online algorithm \A, which always schedules some job at
all times, it satisfies that $G_1^*(\jobset) = \int_{0}^{\infty} \frac{dG_1^*(\jobset(t))}{dt} dt$
and $G_2^*(\jobset) = \int_{0}^{\infty} \frac{dG_2^*(\jobset(t))}{dt} dt$.
\end{lemma}

\begin{proof}
According to definition, the $t$-prefix $J_i(\la{t})$ of a job $J_i$ does not belong to
$\jobset(\la{t})$ if the job is not yet released at time $t$, but the job will remain in
$\jobset(\la{t})$ even after it has been completed. This ensures that the completion of a job will
not decrease $\frac{dG_1^*(\jobset(t))}{dt}$ and $\frac{dG_2^*(\jobset(t))}{dt}$, so they are
non-negative at all times. Since the online algorithm \A always schedules some active job, all jobs
are guaranteed to complete in finite amount of time. Thus, we can express the two lower bounds by
integrating their rates of change over time, i.e., $G_1^*(\jobset) = \int_{0}^{\infty}
\frac{dG_1^*(\jobset(t))}{dt} dt$ and $G_2^*(\jobset) = \int_{0}^{\infty}
\frac{dG_2^*(\jobset(t))}{dt} dt$.
\end{proof}

Finally, for analyzing the performance of an online algorithm \A for a set of jobs with arbitrary
release time, we also need to define a potential function $\Phi(t)$, whose form is usually
associated with the status of the job set at any time $t$ under both online algorithm and the
optimal offline algorithm \cite{ImMoPr11}. We can similarly define $\frac{d\Phi(t)}{dt} =
\frac{\Phi(t+\dt) - \Phi(t)}{\dt}$ to be the rate of change for the potential function at time $t$.

\subsubsection{Analysis Techniques}

We now outline two analysis techniques for proving the competitiveness of any online scheduling
algorithm. They are commonly known as the amortized local competitiveness argument and the local
competitiveness argument in the literature \cite{Pruhs07, ImMoPr11}. Both techniques compare the
cost of an online algorithm at any local time instance, or its instantaneous cost, with respect to
that of an optimal offline scheduler. For arbitrarily released jobs, the comparison is performed
with the help of the first lower bound given in \lemref{lowerbounds} and a carefully designed
potential function. For jobs with identical release time, both lower bounds are used to represent
the performance of the optimal.

The following lemma first illustrates the use of \emph{amortized local competitiveness argument}
for jobs with arbitrary release time. The technique arrives at the competitive ratio of any online
algorithm \A by bounding its instantaneous cost at any time $t$ with respect to the optimal offline
scheduler.

\begin{lemma}\label{lem:amortized}
Suppose that an online algorithm \A schedules a set $\jobset$ of jobs with arbitrary release time.
Then \A is $(c_1 + c_2)$-competitive with respect to total flow time plus energy, if given a
potential function $\Phi(t)$, the execution of the job set satisfies the following

- Boundary condition: $\Phi(0)\le 0$ and $\Phi(\infty)\ge 0$;

- Arrival condition: $\Phi(t)$ does not increase whenever a new job arrives;

- Completion condition: $\Phi(t)$ does not increase whenever a job completes under either \A or the
optimal offline scheduler;

- Running condition: $\frac{dG_{\A}(\jobset(t))}{dt} + \frac{d\Phi(t)}{dt} \le c_1 \cdot
\frac{dG_{\OPT}(\jobset^*(t))}{dt} + c_2 \cdot \frac{dG_1^*(\jobset(t))}{dt}$ at all time $t \ge
0$.
\end{lemma}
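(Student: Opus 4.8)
The plan is to integrate the running condition over time and use the boundary, arrival, and completion conditions to control the discrete jumps in the potential function. Specifically, I would partition the time interval $[0,\infty)$ into the maximal sub-intervals on which no job arrives or completes (under either \A or the optimal), so that on the interior of each such sub-interval $\Phi(t)$ is differentiable and the running condition applies pointwise.

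\medskip

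\noindent\emph{Step 1: Integrate the running condition.} On each sub-interval, integrating
$\frac{dG_\A(\jobset(t))}{dt} + \frac{d\Phi(t)}{dt} \le c_1 \cdot \frac{dG^*(\jobset^*(t))}{dt} + c_2 \cdot \frac{dG_1^*(\jobset(t))}{dt}$
and then summing over all sub-intervals yields
$G_\A(\jobset) + \Delta_{\mathrm{run}}\Phi \le c_1 G^*(\jobset) + c_2 G_1^*(\jobset)$,
where $\Delta_{\mathrm{run}}\Phi$ is the net change of $\Phi$ accumulated from the continuous (running) portion only, and where I use that $\int_0^\infty \frac{dG_\A(\jobset(t))}{dt}\,dt = G_\A(\jobset)$ since $G_\A(\jobset)=\int_0^\infty (n_t+u_t)\,dt$, and likewise for $G^*$ and $G_1^*$. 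Here I would note $G_\A(\jobset)$ is finite because a valid schedule completes all jobs in finite time, so all integrals converge.

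\medskip

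\noindent\emph{Step 2: Account for the jumps.} The total change $\Phi(\infty)-\Phi(0)$ equals $\Delta_{\mathrm{run}}\Phi$ plus the sum of all jumps of $\Phi$ at arrival and completion instants. By the arrival condition each arrival jump is $\le 0$, and by the completion condition each completion jump (under \A or the optimal) is $\le 0$, so the sum of all jumps is $\le 0$; hence $\Delta_{\mathrm{run}}\Phi \ge \Phi(\infty)-\Phi(0)$. Combining with the boundary condition $\Phi(0)\le 0$ and $\Phi(\infty)\ge 0$ gives $\Delta_{\mathrm{run}}\Phi \ge \Phi(\infty)-\Phi(0) \ge 0$. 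Substituting into the inequality from Step 1 yields $G_\A(\jobset) \le G_\A(\jobset) + \Delta_{\mathrm{run}}\Phi \le c_1 G^*(\jobset) + c_2 G_1^*(\jobset)$.

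\medskip

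\noindent\emph{Step 3: Conclude.} Since $\lemref{lowerbounds}$ gives $G_1^*(\jobset) \le G^*(\jobset)$, we obtain $G_\A(\jobset) \le (c_1+c_2) G^*(\jobset)$, i.e. \A is $(c_1+c_2)$-competitive. The main obstacle is purely bookkeeping: making the decomposition of $\Phi$'s total variation into ``running'' and ``jump'' parts rigorous when there may be countably many breakpoints (arrivals, completions of all $n$ jobs under both schedules, and phase transitions) that could accumulate; one handles this by observing the breakpoints form a set of measure zero and $\Phi$ is piecewise absolutely continuous, so the fundamental theorem of calculus applies on each piece and the jump sum is an absolutely convergent (indeed finite, since there are only finitely many jobs and phases) series of nonpositive terms. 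No single step is deep; the content of the lemma is entirely in correctly assembling these ingredients so that subsequent sections need only exhibit a potential function meeting the four conditions.
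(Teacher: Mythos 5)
Your proof is correct and follows essentially the same route as the paper's: integrate the running condition, absorb the nonpositive jumps of $\Phi$ at arrival/completion instants together with the boundary condition to discard the potential terms, and finish with $G_1^*(\jobset)\le G^*(\jobset)$ from \lemref{lowerbounds}. The only difference is that you spell out the bookkeeping (splitting $\Phi(\infty)-\Phi(0)$ into running and jump parts) slightly more explicitly than the paper does.
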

\begin{proof}
Let $T$ denote the set of time instances when a job arrives or completes under either the online
algorithm \A or the optimal offline scheduler. Integrating the running condition over time and
applying \lemref{rateofchange}, we get $G_{\A}(\jobset) + \Phi(\infty) - \Phi(0) + \sum_{t\in
  T}\paren{\Phi(t^-) - \Phi(t^+)} \le c_1 \cdot G_{\OPT}(\jobset) + c_2
\cdot G_1^*(\jobset)$, where $t^-$ and $t^+$ denote the times right before and after the event
occurred at time $t$. Now, applying boundary, arrival and completion conditions to the above
inequality, we get $G_{\A}(\jobset) \le c_1 \cdot G_{\OPT}(\jobset) + c_2 \cdot G_1^*(\jobset)$.
Since $G_1^*(\jobset)$ is a lower bound on the total flow time plus energy of job set $\jobset$
according to \lemref{lowerbounds}, the performance of algorithm \A satisfies $G_{\A}(\jobset) \le
(c_1 + c_2) \cdot G_{\OPT}(\jobset)$.
\end{proof}

When scheduling for a set of jobs with identical release time, the analysis turns out to be simpler
as the potential function is usually not needed. In this case, we can get the competitive ratio of
online algorithm \A by using the \emph{local competitiveness argument}, which directly compares its
instantaneous cost at any time $t$ with respect to the rates of change for both lower bounds given
in \lemref{lowerbounds}. The following lemma illustrates this technique.

\begin{lemma}\label{lem:local}
Suppose that an online algorithm \A schedules a set $\jobset$ of jobs with identical release time.
Then \A is $(c_1 + c_2)$-competitive with respect to total flow time plus energy, if the execution
of the job set satisfies the following

- Running condition: $\frac{dG_{\A}(\jobset(t))}{dt} \le c_1 \cdot \frac{dG_1^*(\jobset(t))}{dt} +
c_2 \cdot \frac{dG_2^*(\jobset(t))}{dt}$ at all time $t\ge 0$.
\end{lemma}

\begin{proof}
Similarly to the proof of \lemref{amortized}, by integrating the running condition over time and
applying \lemref{rateofchange}, we get $G_{\A}(\jobset) \le c_1 \cdot G_1^*(\jobset) + c_2 \cdot
G_2^*(\jobset)$. Since both $G_1^*(\jobset)$ and $G_2^*(\jobset)$ are lower bounds on the total
flow time plus energy of job set $\jobset$ according to \lemref{lowerbounds}, the result follows.
\end{proof}

\subsection{Non-clairvoyant Algorithm: \NEQUI}\label{sec:nequi}

It was shown in \cite{ChanEdPr11,SunCaHs09b} that any non-clairvoyant algorithm which allocates a
set of uniform-speed processors to a job is $\Omega(P^{(\alpha-1)/\alpha^2})$-competitive, where
$P$ is the total number of processors. To achieve better performance, we propose a non-clairvoyant
algorithm called \NEQUI (Non-uniform Equi-partitioning), which equally partitions the $P$
processors among the $n_t$ active jobs at any time $t$. \algref{nequi} describes its details.

Specifically, when the number of processors is at least the number of active jobs, i.e., $P \ge
n_t$, it sets the speeds of the allocated processors for each active job in a non-uniform manner.
Intuitively, since the algorithm does not know the parallelism of a job to guide its processor
allocation, the non-uniform speed assignment balances the waste of energy due to the possible
overallocation and the delay of the job due to the possible underallocation. On the other hand,
when $n_t
> P$, it assigns the same speed to all processors and relies on time-sharing to allocate $P/n_t$
fraction of a processor to each active job, which is commonly implemented in the operating system
using the round robin policy. Note that since the parallelism of any active job is at least 1, no
energy waste will be incurred in this case.


\begin{algorithm}[h]
\caption{\NEQUI}\label{alg:nequi}
\begin{algorithmic}[1]
\REQUIRE{total number $P$ of processors and number $n_t$ of active jobs at time $t$. }
\ENSURE{number of allocated processors and their speeds for each active job at time $t$.}

\IF {$P \ge n_t$} \STATE{allocate $a_i(t) = \floor{\frac{P}{n_t}}$ processors to each active job
$J_i$.} \STATE {set the speed of the $j$'th processor allocated to job $J_i$ to be $s_{ij}(t) =
\paren{\frac{1}{(\alpha-1)H_P \cdot j}}^{1/\alpha}$, where
$j=1, \cdots, a_i(t)$ and $H_P = \sum_{k=1}^{P}\frac{1}{k}$ is the $P$'th harmonic number.} \ELSE
\STATE{allocate $a_i(t) = \frac{P}{n_t}$ fraction of a processor to each active job $J_i$.}
\STATE{set the speed of all processors to be $s(t) =
\paren{\frac{n_t}{(\alpha-1)H_P \cdot P}}^{1/\alpha}$.} \ENDIF
\end{algorithmic}
\end{algorithm}

At any time $t$ when job $J_i$ is in its $k$'th phase, we say that it is \emph{satisfied} if its
processor allocation is at least its instantaneous parallelism, i.e., $a_i(t) \ge h_i^k$.
Otherwise, the job is said to be \emph{deprived}. Let $\jobset(t)$ denote the set of all active
jobs at time $t$, and let $\jobset_S(t)$ and $\jobset_D(t)$ denote the set of satisfied and
deprived jobs at time $t$, respectively. For convenience, we let $n_t^S = \card{\jobset_S(t)}$ and
$n_t^D = \card{\jobset_D(t)}$. Since an active job is either satisfied or deprived, we have
$\card{\jobset(t)} = n_t = n_t^S + n_t^D$. Moreover, we define $x_t = n_t^D / n_t$ to be the
\emph{deprived ratio} at time $t$. To assist analysis, we first bound the execution rate and the
power consumption of \NEQUI for any active job at time $t$ in the following lemma.

\begin{lemma}\label{lem:ratebounds}
Suppose that \NEQUI schedules a set $\jobset$ of jobs. Then for any job $J_i \in \jobset$, its
execution rate at time $t$ satisfies
$\paren{\frac{1}{(\alpha-1)H_P}}^{1/\alpha}\frac{\bar{a}_i(t)^{1-1/\alpha}}{2^{1/\alpha}} \le
\Gamma_i^k(t) \le\paren{\frac{1}{(\alpha-1)H_P}}^{1/\alpha}
\frac{\bar{a}_i(t)^{1-1/\alpha}}{1-1/\alpha}$, where $\bar{a}_i(t) = \min\{a_i(t), h_i^k\}$ denotes
the effective processor allocation for job $J_i$ at time $t$. Also, the power consumption of the
job at time $t$ satisfies $u_i(t)\le\frac{1}{\alpha-1}$.
\end{lemma}

\begin{proof}
When $P < n_t$, we have $a_i(t) = P/n_t < 1 \le h_i^k$ for job $J_i$, so $\bar{a}_i(t) = a_i(t)$.
The execution rate of the job is given by $\Gamma_i^k(t) = a_i(t)s(t) =
\paren{\frac{1}{(\alpha-1)H_P}}^{1/\alpha}\bar{a}_i(t)^{1-1/\alpha}$, and its power consumed is
$u_i(t) = \frac{1}{(\alpha-1)H_P} \le \frac{1}{\alpha-1}$.

When $P \ge n_t$, we have $\bar{a}_i(t) \ge 1$, since $a_i(t) = \floor{P/n_t} \ge 1$ and $h_i^k \ge
1$. The execution rate of the job is $\Gamma_i^k(t) = \sum_{j=1}^{\bar{a}_i(t)}s_{ij}(t) =
\paren{\frac{1}{(\alpha-1)H_P}}^{1/\alpha}\sum_{j=1}^{\bar{a}_i(t)}\frac{1}{j^{1/\alpha}}
$, which can be approximated with integration: $\sum_{j=1}^{\bar{a}_i(t)}s_{ij}(t) \le
\paren{\frac{1}{(\alpha-1)H_P}}^{1/\alpha}
\int_{0}^{\bar{a}_i(t)}\frac{1}{j^{1/\alpha}}dj =
\paren{\frac{1}{(\alpha-1)H_P}}^{1/\alpha}
\frac{\bar{a}_i(t)^{1-1/\alpha}}{1-1/\alpha}$, and $\sum_{j=1}^{\bar{a}_i(t)}s_{ij}(t) \ge
\paren{\frac{1}{(\alpha-1)H_P}}^{1/\alpha}
\int_{1}^{\bar{a}_i(t)+1}\frac{1}{j^{1/\alpha}}dj =
\paren{\frac{1}{(\alpha-1)H_P}}^{1/\alpha}
\frac{(\bar{a}_i(t)+1)^{1-1/\alpha}-1}{1-1/\alpha} \ge
\paren{\frac{1}{(\alpha-1)H_P}}^{1/\alpha} \cdot
\frac{2^{1-1/\alpha}-1}{1-1/\alpha}\bar{a}_i(t)^{1-1/\alpha} \ge
\paren{\frac{1}{(\alpha-1)H_P}}^{1/\alpha}
\frac{\bar{a}_i(t)^{1-1/\alpha}}{2^{1/\alpha}}$. The second to last inequality follows because
$\frac{(x+1)^{1-1/\alpha}-1}{x^{1-1/\alpha}}$ is an increasing function of $x$ for all $x > 0$ and
we have $\bar{a}_i(t) \ge 1$. The power consumption of the job satisfies $u_i(t) =
\sum_{j=1}^{a_i(t)}s_{ij}(t)^\alpha = \frac{1}{(\alpha-1)H_P}\sum_{j=1}^{a_i(t)}\frac{1}{j} =
\frac{H_{a_i(t)}}{(\alpha-1)H_P} \le \frac{1}{\alpha-1}$, where $H_{a_i(t)}$ is $a_i(t)$'th
harmonic number, and $H_{a_i(t)} \le H_P$ since $a_i(t) \le P $ at all times.
\end{proof}

\subsubsection{Performance for Jobs with Arbitrary Release Time}

We first bound the performance of \NEQUI for a set of jobs with arbitrary release time. We adopt
the potential function proposed by Lam et al.~\cite{LamLeTo08} in the analysis of an online speed
scaling algorithm for sequential jobs. Specifically, we focus on the $t$-suffix $\jobset(\ra{t})$
of job set $\jobset$ and define $n_t(z)$ to be the number of active jobs whose remaining work is at
least $z$ at time $t$ under \NEQUI, i.e., $n_t(z) = \sum_{J_i \in \jobset(t)}[w(J_i(\ra{t})) \ge
z]$, where $[x]$ is $1$ if proposition $x$ is true and $0$ otherwise. Also, define $n_t^*(z)$ to be
the number of active jobs whose remaining work is at least $z$ at time $t$ under the optimal
offline algorithm. The potential function is then defined as
\begin{equation}\label{eq:potential}
\Phi(t) = \eta \int_{0}^{\infty}\left[\paren{\sum_{i=1}^{n_t(z)}i^{1-1/\alpha}} -
n_t(z)^{1-1/\alpha}n_t^*(z)\right]dz,
\end{equation}
where $\eta = \eta' \frac{H_P^{1/\alpha}}{P^{1-1/\alpha}}$ and $\eta'$ is a constant to be
specified later. In particular, the integration of the first term of the potential function is
proportional to the optimal total flow time plus energy for the remaining portion, or $t$-suffix,
of the job set at time $t$ scheduled under \NEQUI. The second term of the potential function is
added to ensure that the arrival condition will be satisfied. (See proof of \thmref{release}.)

In addition, we need to use the following lemma in our proof.

\begin{lemma}\label{lem:young}
For any $n_t \ge 0$, $s_j^*\ge 0$ and $\lambda > 0$, we have that $n_t^{1-1/\alpha}s_j^{*} \le
\frac{\lambda
\paren{H_P \cdot P}^{1-1/\alpha}}{\alpha}\paren{s_j^{*}}^{\alpha} +
\frac{1-1/\alpha}{\lambda^{1/(\alpha-1)}\paren{H_P \cdot P}^{1/\alpha}}n_t$.
\end{lemma}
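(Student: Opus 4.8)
The plan is to recognize the claimed inequality as a single application of Young's inequality (equivalently, the weighted arithmetic--geometric mean inequality) $ab \le \frac{a^p}{p} + \frac{b^q}{q}$, valid for $a,b \ge 0$ and conjugate exponents $p,q>1$ with $\frac1p+\frac1q=1$, used here with $p=\alpha$ and $q=\frac{\alpha}{\alpha-1}$, so that $\frac1q = 1-\frac1\alpha$. Before invoking it, I would dispose of the degenerate cases: if $n_t=0$ or $s_j^*=0$ the left-hand side is $0$ and the right-hand side is nonnegative, so the inequality holds trivially; hence assume $n_t>0$ and $s_j^*>0$.

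Next I would write the product on the left as $n_t^{1-1/\alpha}s_j^* = a\cdot b$ with $a = c\,s_j^*$ and $b = n_t^{1-1/\alpha}/c$, where $c>0$ is a free scaling constant. Young's inequality with the above conjugate pair then yields
\[
n_t^{1-1/\alpha}s_j^* = ab \le \frac{\paren{c\,s_j^*}^{\alpha}}{\alpha} + \paren{1-\frac1\alpha}\paren{\frac{n_t^{1-1/\alpha}}{c}}^{\alpha/(\alpha-1)} = \frac{c^{\alpha}}{\alpha}\paren{s_j^*}^{\alpha} + \paren{1-\frac1\alpha}\frac{n_t}{c^{\alpha/(\alpha-1)}},
\]
where the last equality uses $\paren{1-\frac1\alpha}\cdot\frac{\alpha}{\alpha-1}=1$ to collapse the exponent of $n_t$ to $1$.

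Finally I would choose $c$ to match the target coefficients: setting $c^{\alpha} = \lambda\,(H_P\cdot P)^{1-1/\alpha}$, i.e.\ $c = \lambda^{1/\alpha}(H_P\cdot P)^{(\alpha-1)/\alpha^2}$, makes the first term exactly $\frac{\lambda (H_P\cdot P)^{1-1/\alpha}}{\alpha}\paren{s_j^*}^{\alpha}$, and then a one-line computation gives $c^{\alpha/(\alpha-1)} = \lambda^{1/(\alpha-1)}(H_P\cdot P)^{1/\alpha}$, which turns the second term into $\frac{1-1/\alpha}{\lambda^{1/(\alpha-1)}(H_P\cdot P)^{1/\alpha}}\,n_t$ --- precisely the stated bound. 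There is no real obstacle here beyond keeping the exponent bookkeeping straight; the only point worth double-checking is that the conjugate pair is legitimate, i.e.\ $\frac1\alpha + \paren{1-\frac1\alpha} = 1$, which holds since $\alpha>1$.
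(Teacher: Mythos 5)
Your proof is correct and is essentially the paper's own argument: both reduce the claim to a single application of Young's inequality with conjugate exponents $\alpha$ and $\alpha/(\alpha-1)$. The only cosmetic difference is that the paper invokes the integral form with $f(x)=\lambda(H_P\cdot P)^{1-1/\alpha}x^{\alpha-1}$ to absorb the constant, whereas you use the elementary product form $ab\le a^p/p+b^q/q$ and absorb the same constant via the rescaling $a=c\,s_j^*$, $b=n_t^{1-1/\alpha}/c$; your exponent bookkeeping checks out.
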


\begin{proof}
The lemma is a direct result of Young's Inequality \cite{HardyLiPo52}, which is stated formally as
follows. If $f$ is a continuous and strictly increasing function on $[0, c]$ with $c > 0$, $f(0) =
0$, $a \in [0, c]$ and $b \in [0, f(c)]$, then $ab \le \int_{0}^{a}f(x)dx +
\int_{0}^{b}f^{-1}(x)dx$, where $f^{-1}$ is the inverse function of $f$. By setting $f(x) = \lambda
\paren{H_P\cdot P}^{1-1/\alpha} x^{\alpha - 1}$, $a = s_j^*$ and $b = n_t^{1-1/\alpha}$, the lemma
is directly implied.
\end{proof}

Using \lemrefthree{amortized}{ratebounds}{young}, we now prove the competitive ratio of \NEQUI for
jobs with arbitrary release time.

\begin{theorem}\label{thm:release}
\NEQUI is $O(\ln P)$-competitive with respect to total flow time plus energy for any set of
parallel jobs, where $P$ is the total number of processors.
\end{theorem}

\begin{proof}
We will show that the execution of any job set scheduled by \NEQUI (\NE for short) satisfies the
boundary, arrival and completion conditions in \lemref{amortized}, as well as the running condition
$\frac{dG_{\NE}(\jobset(t))}{dt} + \frac{d\Phi(t)}{dt} \le c_1 \cdot
\frac{dG_{\OPT}(\jobset^*(t))}{dt} + c_2 \cdot \frac{dG_1^*(\jobset(t))}{dt}$, where $c_1 = O(\ln
P)$ and $c_2 = O(\ln^{1/\alpha}P)$. The theorem then follows by \lemref{amortized}.

- \emph{Boundary condition}: At time $0$, no job exists, so we have $n_t(z)=n_t^*(z)=0$ for all
$z\ge 0$, and so $\Phi(0) = 0$. At time $\infty$, all jobs are completed, so again we have
$\Phi(\infty) = 0$. Hence, the boundary condition is satisfied.

- \emph{Arrival condition}: Let $t^-$ and $t^+$ denote the time instances right before and after a
new job with work $w$ arrives at time $t$. Then we have $n_{t^+}(z) = n_{t^-}(z) + 1$ for $z \le w$
and $n_{t^+}(z) = n_{t^-}(z)$ for $z > w$, and similarly $n_{t^+}^*(z) = n_{t^-}^*(z) + 1$ for $z
\le w$ and $n_{t^+}^*(z) = n_{t^-}^*(z)$ for $z > w$. For convenience, we define $\phi_t(z) =
\paren{\sum_{i=1}^{n_t(z)}i^{1-1/\alpha}} -
n_t(z)^{1-1/\alpha}n_t^*(z)$. It is obvious that for $z > w$, we have $\phi_{t^+}(z) =
\phi_{t^-}(z)$. For $z \le w$, we can get $\phi_{t^+}(z) - \phi_{t^-}(z) =
n_{t^-}^*(z)\paren{n_{t^{-}}(z)^{1-1/\alpha} -
  (n_{t^{-}}(z)+1)^{1-1/\alpha}} \le 0$. Hence, $\Phi(t^+) = \eta
\int_{0}^{\infty} \phi_{t^+}(z)dz \le \eta \int_{0}^{\infty} \phi_{t^-}(z)dz = \Phi(t^-)$, and the
arrival condition is satisfied.

- \emph{Completion condition}: When a job completes under either \NEQUI or the optimal schedule,
$\Phi(t)$ is unchanged because $n_t(z)$ or $n^*_t(z)$ reduces by 1 for $z=0$ but does not change
for all $z > 0$. Therefore, the completion condition is satisfied.

- \emph{Running condition}: According to \lemref{ratebounds}, the overall power consumption $u_t$
of all active jobs at time $t$ under \NEQUI satisfies $u_t \le \frac{n_t}{\alpha-1}$. Thus, we have
$\frac{dG_{\NE}(\jobset(t))}{dt} = n_t + u_t \le \frac{\alpha}{\alpha-1}n_t$. Suppose the optimal
offline scheduler sets the speed of the $j$'th processor to $s_j^*$ at time $t$, which then gives
$\frac{dG_{\OPT}(\jobset^*(t))}{dt} = n_t^* + u_t^* = n_t^* +
\sum_{j=1}^{P}\paren{s_j^*}^{\alpha}$. To bound the rate of change $\frac{dG_1^*(\jobset(t))}{dt}$,
it turns out to be sufficient to consider the set $J_S(t)$ of satisfied jobs. Specifically, for
each satisfied job $J_i \in J_S(t)$, if it is in its $k$'th phase at time $t$ under \NEQUI, then we
have $a_i(t) \ge h_i^k$. According to \lemref{ratebounds}, the execution rate of the job is given
by $\Gamma_i^{k}(t) \ge
\paren{\frac{1}{(\alpha-1)H_P}}^{1/\alpha} \frac{\paren{h_i^k}^{1-1/\alpha}}{2^{1/\alpha}}$. Since
$\frac{dG_1^*(\jobset(t))}{dt}$ only depends on the unit-power span completed for the jobs at time
$t$ under the schedule of \NEQUI, we have $\frac{dG_1^*(\jobset(t))}{dt} \ge
\frac{\alpha}{(\alpha-1)^{1-1/\alpha}} \sum_{J_i\in
  \jobset_S(t)}\frac{\Gamma_i^{k}(t)}{\paren{h_i^k}^{1-1/\alpha}} \ge
\frac{\alpha}{\alpha-1}\paren{\frac{1}{2H_P}}^{1/\alpha}n_t^S =
\frac{\alpha}{\alpha-1}\paren{\frac{1}{2H_P}}^{1/\alpha}(1-x_t)n_t$, where $x_t$ is the deprived
ratio.

Now, we focus on finding an upper bound for the rate of change $\frac{d\Phi(t)}{dt}$ of the
potential function $\Phi(t)$ at time $t$. To this end, we mainly consider the set $\jobset_D(t)$ of
deprived jobs. If a deprived job $J_i \in \jobset_D(t)$ is in its $k$'th phase at time $t$ under
\NEQUI, then $a_i(t) < h_i^k$, and the execution rate of the job satisfies
$\paren{\frac{1}{(\alpha-1)H_P}}^{1/\alpha}\frac{a_i(t)^{1-1/\alpha}}{2^{1/\alpha}} \le
\Gamma_i^k(t) \le\paren{\frac{1}{(\alpha-1)H_P}}^{1/\alpha} \frac{a_i(t)^{1-1/\alpha}}{1-1/\alpha}$
by \lemref{ratebounds}. In the worst case, the $n_t^D$ deprived jobs may have the most remaining
work, so they will take the smallest $n_t^D$ indices in the first term of the potential function
given in \eqref{potential}. While considering the set of satisfied jobs can further decrease
$\Phi(t)$, we ignore these jobs for deriving an upper bound on the rate of change for the first
term of the potential function, and use all active jobs for bounding the rate of change for the
second term. The overall rate of change for the potential function can then bounded by
\begin{eqnarray*}\label{ineq:phi}
\frac{d\Phi(t)}{dt} &=& \eta \cdot
\frac{d}{dt}\int_{0}^{\infty}\left[\paren{\sum_{i=1}^{n_t(z)}i^{1-1/\alpha}} -
n_t(z)^{1-1/\alpha}n_t^*(z)\right]dz \nonumber \\
&\le& \frac{\eta}{\dt}\int_{0}^{\infty}\left[\paren{\sum_{i=1}^{n_{t+\dt}(z)} i^{1-1/\alpha}} -
\paren{\sum_{i=1}^{n_{t}(z)} i^{1-1/\alpha}}\right]dz \nonumber \\
    &&+ \frac{\eta}{\dt}\int_{0}^{\infty}\left[n_t(z)^{1-1/\alpha}\paren{n_t^*(z)
    - n_{t+\dt}^{*}(z)} + n_{t}^*(z)\paren{n_t(z)^{1-1/\alpha} - n_{t+\dt}(z)^{1-1/\alpha}}\right]dz \nonumber \\
&\le& \frac{\eta'H_P^{1/\alpha}}{P^{1-1/\alpha}}
\paren{-\sum_{i=1}^{n_t^D}i^{1-1/\alpha}\cdot \Gamma_i^{k}(t) +
  n_t^{1-1/\alpha} \sum_{j=1}^{P}s_j^* + n_t^*
  \sum_{i=1}^{n_t}\paren{i^{1-1/\alpha} - (i-1)^{1-1/\alpha}}
  \Gamma_i^{k}(t)}.
\end{eqnarray*}

We can get $\sum_{i=1}^{n_t^D}i^{1-1/\alpha} \ge \int_{0}^{n_t^D}i^{1-1/\alpha}di =
\frac{\paren{n_t^D}^{2-1/\alpha}}{2-1/\alpha} \ge \frac{x_t^2n_t^{2-1/\alpha}}{2}$ and
$\sum_{i=1}^{n_t}\paren{i^{1-1/\alpha} - (i-1)^{1-1/\alpha}} = n_t^{1-1/\alpha}$. Moreover,
according to \lemref{young}, we have $n_t^{1-1/\alpha}\sum_{j=1}^{P}s_j^* \le \frac{\lambda
\paren{H_P \cdot P}^{1-1/\alpha}}{\alpha}\sum_{j=1}^{P}\paren{s_j^*}^{\alpha}
+ \frac{1-1/\alpha}{\lambda^{1/(\alpha-1)}\paren{H_P \cdot P}^{1/\alpha}} P n_t$, where $\lambda$
is a constant to be specified later. For each satisfied job $J_i\in J_S(t)$, we have $h_i^k\le
a_i(t)$, so its execution rate satisfies $\Gamma_i^k(t) \le
\paren{\frac{1}{(\alpha-1)H_P}}^{1/\alpha} \frac{\paren{h_i^k}^{1-1/\alpha}}{1-1/\alpha} \le
\paren{\frac{1}{(\alpha-1)H_P}}^{1/\alpha} \frac{a_i(t)^{1-1/\alpha}}{1-1/\alpha}$ by \lemref{ratebounds}.
Finally, for each job $J_i \in \jobset(t)$, we have $\frac{P}{2n_t} \le \floor{\frac{P}{n_t}} \le
a_i(t) \le \frac{P}{n_t}$. Substituting these bounds into $\frac{d\Phi(t)}{dt}$ above and
simplifying, we have
\begin{equation}\label{ineq:phi}
\frac{d\Phi(t)}{dt} \le \eta' \paren{-\frac{x_t^2}{4(\alpha-1)^{1/\alpha}}n_t + \frac{\lambda
H_P}{\alpha}\sum_{j=1}^{P}\paren{s_j^*}^{\alpha} + \frac{1-1/\alpha}{\lambda^{1/(\alpha-1)}}n_t +
\frac{\alpha}{(\alpha-1)^{1+1/\alpha}}n_t^*}.
\end{equation}

Now, we set $\eta' = \frac{4\alpha^2}{(\alpha-1)^{1-1/\alpha}}$ and $\lambda =
4^{\alpha-1}(\alpha-1)^{1-1/\alpha}$. Substituting \ineqref{phi} as well as the bounds for
$\frac{dG_{\NE}(\jobset(t))}{dt}$, $\frac{dG_{\OPT}(\jobset^*(t))}{dt}$ and
$\frac{dG_1^*(\jobset(t))}{dt}$ into the running condition, we can see that it can be satisfied for
all valid values of $x_t$ by setting the multipliers to be $c_1 =
\max\{\frac{4\alpha^3}{(\alpha-1)^{2}}, 4^{\alpha}\alpha H_P\}$ and $c_2 = 2\alpha \cdot
\paren{2H_P}^{1/\alpha}$. Since $\alpha$ can be considered as a constant with respect to $P$, and
it is well-known that $H_P = O(\ln P)$, the theorem is proved.
\end{proof}

\subsubsection{Performance for Jobs with Identical Release Time}

We now bound the performance of \NEQUI for jobs with identical release time. We show that the
competitive ratio of \NEQUI can be slightly improved in this case compared to the one achieved for
arbitrarily released jobs. The following theorem gives the result.

\begin{theorem}\label{thm:batch}
\NEQUI is $O(\ln^{1/\alpha}P)$-competitive with respect to total flow time plus energy for any set
of parallel jobs with identical release time, where $P$ is the total number of processors.
\end{theorem}

\begin{proof}
We will show that the execution of any job set scheduled by \NEQUI satisfies the running condition
$\frac{dG_{\NE}(\jobset(t))}{dt} \le c_1 \cdot \frac{dG_1^*(\jobset(t))}{dt} + c_2 \cdot
\frac{dG_2^*(\jobset(t))}{dt}$,
where $c_1 = O(\ln^{1/\alpha}P)$ and $c_2 = O(\ln^{1/\alpha}P)$. The theorem is then implied by
\lemref{local}.

We first note from the proof of \thmref{release} that the instantaneous cost of \NEQUI satisfies
$\frac{dG_{\NE}(\jobset(t))}{dt} \le \frac{\alpha}{\alpha-1}n_t$, and the rate of change for the
first lower bound satisfies $\frac{dG_1^*(\jobset(t))}{dt} \ge
\frac{\alpha}{\alpha-1}\paren{\frac{1}{2H_P}}^{1/\alpha}(1-x_t)n_t$. It remains to bound the rate
of change $\frac{dG_2^*(\jobset(t))}{dt}$ for the second lower bound. To this end, we focus on the
$t$-prefix $\jobset(\la{t})$ of the job set $\jobset$ and redefine $n_t(z)$ to be the number of
jobs whose completed work is at least $z$ at time $t$ under \NEQUI, i.e., $n_t(z) = \sum_{J_i \in
\jobset}[w(J_i(\la{t})) \ge z]$. Note that we do not restrict a job to be active when considering
its contribution towards $n_t(z)$. This is consistent with the definition of $t$-prefix
$\jobset(\la{t})$ for job set $\jobset$, which allows us to express the the second lower bound
given in \lemref{lowerbounds} for the $t$-prefix $\jobset(\la{t})$ of job set $\jobset$ as follows
\begin{equation*}\label{eq:g2}
G_2^*(\jobset(\la{t})) = \frac{\alpha}{\paren{(\alpha-1)P}^{1-1/\alpha}}
\int_{0}^{\infty}\paren{\sum_{i=1}^{n_t(z)}i^{1-1/\alpha}}dz.
\end{equation*}

Again we only focus on the set $\jobset_D(t)$ of deprived jobs, and for each deprived job $J_i\in
\jobset_D(t)$, its execution rate satisfies $\Gamma_i^k(t) \ge
\paren{\frac{1}{(\alpha-1)H_P}}^{1/\alpha}\frac{a_i(t)^{1-1/\alpha}}{2^{1/\alpha}}$
by \lemref{ratebounds}. In the worst case, the $n_t^D$ deprived jobs have completed the most work
so far, so they will take the smallest $n_t^D$ indices in the above expression of
$G_2^*(\jobset(\la{t}))$. Thus, the rate of change $\frac{dG_2^*(\jobset(t))}{dt}$ is bounded by
\begin{eqnarray}\label{ineq:g2change}
\frac{dG_2^*(\jobset(t))}{dt} &=& \frac{\alpha}{\paren{(\alpha-1)P}^{1-1/\alpha}}\cdot \frac{d}{dt} \int_{0}^{\infty}\paren{\sum_{i=1}^{n_{t}(z)}i^{1-1/\alpha}} dz \nonumber \\
&=& \frac{\alpha}{\paren{(\alpha-1)P}^{1-1/\alpha}}\cdot \frac{1}{\dt} \int_{0}^{\infty}\left[\paren{\sum_{i=1}^{n_{t+\dt}(z)}i^{1-1/\alpha}} - \paren{\sum_{i=1}^{n_{t}(z)}i^{1-1/\alpha}} \right] dz \nonumber \\
&\ge& \frac{\alpha}{\paren{(\alpha-1)P}^{1-1/\alpha}}\sum_{i=1}^{n_t^D}i^{1-1/\alpha}\cdot \Gamma_i^k(t) \nonumber \\
&\ge& \frac{\alpha}{\alpha-1}\cdot \frac{x_t^2n_t}{4H_P^{1/\alpha}}.
\end{eqnarray}

Substituting \ineqref{g2change} as well as the bounds for $\frac{dG_{\NE}(\jobset(t))}{dt}$ and
$\frac{dG_1^*(\jobset(t))}{dt}$ into the running condition, we can see that it is satisfied for all
valid values of $x_t$ if we set $c_1 = 2^{1+1/\alpha}H_P^{1/\alpha}$ and $c_2 = 4H_P^{1/\alpha}$.
Thus, the theorem is proved.
\end{proof}

\subsection{Lower Bound for Any Non-clairvoyant Algorithm}

In this section, we prove a lower bound of $\Omega(\ln^{1/\alpha}P)$ on the competitive ratio of
any non-clairvoyant algorithm even with non-uniform speed assignments. Since this lower bound
matches the upper bound of \NEQUI for parallel jobs with identical release time, it shows that
\NEQUI is asymptotically optimal in that setting.

Before proving the lower bound, we first present a useful lemma, which gives the solution of a
minimization problem. The proof basically transforms this minimization problem into a convex
optimization problem, and solves it by applying the KKT conditions \cite{BoydVa04}.

\begin{lemma}\label{lem:kkt}
For any $P \ge 1$, $\alpha > 1$ and $b > 0$, if $\sum_{j}^{P}s_j^\alpha = b$ and $s_1 \ge s_2 \ge
\cdots \ge s_P \ge 0$, then $\max_{1\le h\le P}\frac{h^{1-1/\alpha}}{\sum_{j=1}^{h}s_j}$ is
minimized when $(s_1,s_2,\cdots,s_P)$ satisfy $\frac{h^{1-1/\alpha}}{\sum_{j=1}^{h}s_j} =
\frac{(h-1)^{1-1/\alpha}}{\sum_{j=1}^{h-1}s_j}$ for all $h = 2,\cdots,P$.
\end{lemma}

\begin{proof}
To prove this lemma, we transform the stated problem into a convex optimization problem. We then
show that our proposed solution satisfies the KKT conditions \cite{BoydVa04}, which are known to be
sufficient for the optimality of convex minimization problems. This then leads to the proof of the
lemma. First, by introducing a variable $y$, the original optimization problem can be transformed
into the following minimization problem:
\begin{eqnarray}
\mbox{minimize}&&y \nonumber \\
\mbox{subject to}&&\sum_{j=1}^{P} s_j^\alpha = b  \label{eq:nonconvex} \\
&& s_j \geq s_{j+1} \mbox{ for $ j = 1,\cdots,P-1$} \nonumber \\
&& y \geq \frac{h^{1-1/\alpha}}{\sum_{j=1}^{h} s_j} \mbox{ for $ h = 1,\cdots,P$} \nonumber
\end{eqnarray}

However, the above minimization problem is not convex because its equality constraint (Equation
\ref{eq:nonconvex}) is not linear.  Substituting $z_j = s_j^\alpha$, we transform it into a convex
optimization problem as follows.
\begin{eqnarray}
\mbox{minimize}&&y \nonumber \\
\mbox{subject to}&& \sum_{j=1}^{P} z_j = b   \label{eq:linear}\\
&& z_{j+1} - z_j \leq 0 \mbox{ for $ j = 1,\cdots,P-1$}  \label{ineq:greater} \\
&& \frac{h^{1-1/\alpha}}{\sum_{j=1}^{h} z_j^{1/\alpha}} - y \leq 0 \mbox{ for $ h = 1,\cdots,P$}
\label{ineq:yconst}
\end{eqnarray}

For this minimization problem, the objective function and the only equality constraint
(\eqref{linear}) are linear, the inequality constraints (\ineqreff{greater} and \ineqreff{yconst})
are convex. Note that \ineqreff{yconst} are convex because $1/f(x)$ is a convex function if $f(x)$
is a positive concave function, and ${\sum_{j=1}^{h} z_j^{1/\alpha}}$ is concave because
$z_j^{1/\alpha}$ is concave for $\alpha > 1$. We have now transformed our min-max optimization
problem into a convex minimization problem. We will prove that the following $(y^*,
z_1^*,\cdots,z_P^*)$ is an optimal solution to the above convex minimization problem, by showing
that it satisfies the KKT conditions.

\begin{equation}
y^* = \frac{h^{1-1/\alpha}}{\sum_{j=1}^{h} (z_j^*)^{1/\alpha}} \mbox{
  for } h = 1,\cdots,P
\label{eq:psolution}
\end{equation}

Let $x_j = j^{1-1/\alpha} - (j-1)^{1-1/\alpha}$ for $j = 1,\cdots,P$, so that $x_j > x_{j+1}$. From
\eqref{psolution} and \eqref{linear}, we get $z^*_j = b\cdot
\frac{x_j}{\sum_{i=1}^{P}x_j^{\alpha}}$ and therefore $z^*_j > z^*_{j+1}$ for $j = 1,\cdots,P-1$.

To prove $(y^*, z_1^*,\cdots,z_P^*)$ satisfies the KKT conditions, we need to show that it
satisfies primal feasibility, dual feasibility, complementary slackness, and stationarity.  It is
not hard to see that the proposed solution satisfies the primal feasibility in \eqref{linear},
\ineqreff{greater} and \ineqreff{yconst}.  Let us now associate multipliers with the constraints:
\begin{eqnarray*}
\lambda &:& \sum_{j=1}^{P} z_j = b\\
w_j &:& z_{j+1} - z_j \leq 0 \mbox{ for $ j = 1, ..., P-1$}\\
\mu_h &: & \frac{h^{1-1/\alpha}}{\sum_{j=1}^{h} z_j^{1/\alpha}} - y \leq 0 \mbox{ for $ h = 1, ...,
P$}
\end{eqnarray*}

Since we have $z^*_j > z^*_{j+1}$ for $j = 1,\cdots,P-1$, then to satisfy complementary slackness,
we have $w_j = 0$ for $j=1,\cdots,P-1$.  Now we need to show that there exists $\lambda$ and $\mu_h
\geq 0$ such that dual feasibility and stationarity are satisfied. To derive the stationarity
condition, consider the Lagrangian function:
\begin{equation}
L(y, z_j, \lambda, \mu_h) = y + \sum_{h = 1}^{P} \mu_h
\left(\frac{h^{1-1/\alpha}}{\sum_{j=1}^{h}z_j^{1/\alpha}} - y \right) +
\lambda\left(\sum_{j=1}^{P}z_j - b\right). \nonumber
\end{equation}

Taking the derivative of the Lagrangian function with respect to $y$ and $z_j$, setting them to
zero, and substituting in $(y^*, z_1^*, ... , z_P^*)$, we get the following set of stationarity
conditions:
\begin{eqnarray}
\sum_{h = 1}^{P}{\mu_h} &=& 1, \label{eq:sum} \\
\frac{(y^*)^2}{\alpha (z_j^*)^{1-1/\alpha}} \left(\sum_{h = j}^P
\frac{\mu_h}{h^{1-1/\alpha}}\right) &=& \lambda \mbox{ for } j= 1,\cdots,P. \label{eq:mu}
\end{eqnarray}

Solving the linear system in \eqreff{mu} by considering $\mu_h$ as variables, we have $\mu_h = c_h
\cdot \lambda$, where $c_h = \frac{h^{1-1/\alpha}\left(\left(z_h^*\right)^{1-1/\alpha} -
  \left(z_{h+1}^*\right)^{1-1/\alpha}\right)\alpha }{(y^*)^2} $, for
each $h=1,\cdots,P$, and $z_{P+1}^*$ is defined to be $0$. According to the values of $(y^*, z_1^*,
\cdots, z_P^*)$, we know that $y^*>0$, $z_h^* > 0$ and $z_h^* > z_{h+1}^*$. Therefore, we have $c_h
> 0$ for $h = 1, ..., P$.  Substituting $\mu_h = c_h \cdot \lambda$ into \eqref{sum}, we get
$\lambda = \frac{1}{\sum_{h=1}^{P} c_h} > 0$, which implies that $\mu_h > 0$ for all $h =
1,\cdots,P$. Thus, we have shown that the dual feasibility is satisfied. Moreover, there exists
$\lambda$ and $\mu_h$ that make our proposed solution $(y^*, z_1^*, \cdots, z_P^*)$ satisfy
stationarity, and hence all the KKT conditions. Therefore, it is an optimal solution for the convex
minimization problem, and the corresponding speed assignment $s_j^* = (z_j^*)^{1/\alpha}$ is
optimal for the original optimization problem.
\end{proof}

Using \lemref{kkt}, the following theorem gives the lower bound for any non-clairvoyant algorithm.

\begin{theorem}
Any non-clairvoyant algorithm is $\Omega(\ln^{1/\alpha}P)$-competitive with respect to total flow
time plus energy, where $P$ is the total number of processors.
\end{theorem}

\begin{proof}
Consider a job set $\jobset$ containing only a single job with constant parallelism $h$ and work
$w$, where $1 \le h \le P$ and $w > 0$. For any non-clairvoyant algorithm \A, we can assume without
loss of generality that it allocates all $P$ processors to the job with speeds $s_1 \ge s_2 \ge
\cdots \ge s_P \ge 0$, which do not change throughout the job's execution since the work $w$ can be
arbitrarily small. The power consumption of \A at any time is then given by $u =
\sum_{j=1}^{P}s_j^\alpha$. The flow time plus energy of $\jobset$ scheduled by \A is $G_A(\jobset)
= \paren{1 + u}\frac{w}{\sum_{j=1}^{h}s_j}$. The optimal offline scheduler, knowing the parallelism
$h$, will allocate exactly $h$ processors of speed $\paren{\frac{1}{(\alpha-1)h}}^{1/\alpha}$, thus
incurring flow time plus energy of $G_{\OPT}(\jobset) = \frac{\alpha}{(\alpha-1)^{1-1/\alpha}}\cdot
\frac{w}{h^{1-1/\alpha}}$. The competitive ratio of \A is
$\frac{G_{\A}(\jobset)}{G_{\OPT}(\jobset)} = \frac{(\alpha-1)^{1-1/\alpha}(1+u)}{\alpha}\cdot
\frac{h^{1-1/\alpha}}{\sum_{j=1}^{h}s_j}$.

The adversary will choose parallelism $h$ to maximize this ratio, i.e., to find $\max_{1\le h\le
  P}\frac{G_{\A}(\jobset)}{G_{\OPT}(\jobset)}$, while the online algorithm
\A chooses $(s_1,\cdots,s_P)$ to minimize $\max_{1\le h\le
  P}\frac{G_{\A}(\jobset)}{G_{\OPT}(\jobset)}$ regardless of the choice of
$h$. According to \lemref{kkt}, $\max_{1\le h\le
  P}\frac{h^{1-1/\alpha}}{\sum_{j=1}^{h}s_j}$ is minimized when
$\frac{h^{1-1/\alpha}}{\sum_{j=1}^{h}s_j} = \frac{(h-1)^{1-1/\alpha}}{\sum_{j=1}^{h-1}s_j}$ for $h
= 2,\cdots,P$. Hence, by solving this set of equations, the best non-clairvoyant algorithm will set
$s_j = \paren{j^{1-1/\alpha} -
  (j-1)^{1-1/\alpha}}s_1$ for $j = 1, 2, \cdots, P$. Since
$j^{1-1/\alpha} - (j-1)^{1-1/\alpha} \ge \frac{1-1/\alpha}{j^{1/\alpha}}$, we have $s_j \ge
\frac{1-1/\alpha}{j^{1/\alpha}}s_1$.  Substituting these into $u = \sum_{j=1}^{P}s_j^\alpha$, we
get $s_1 \le \frac{\alpha
  u^{1/\alpha}}{(\alpha-1)H_P^{1/\alpha}}$, where $H_P$ is the $P$'th
Harmonic number. The competitive ratio of any non-clairvoyant algorithm \A thus satisfies
$\frac{G_{\A}(\jobset)}{G_{\OPT}(\jobset)} \ge \frac{(\alpha-1)^{1-1/\alpha}(1+u)}{\alpha}\cdot
\frac{1}{s_1} \ge \frac{(\alpha-1)^{2-1/\alpha}}{\alpha^2} \cdot
\frac{1+u}{u^{1/\alpha}}H_P^{1/\alpha} \ge \frac{\alpha-1}{\alpha}\cdot H_P^{1/\alpha}$. The last
inequality holds because $\frac{1+u}{u^{1/\alpha}}$ is minimized when $u = \frac{1}{\alpha-1}$.
Since it is also known that $H_P = \Omega(\ln P)$, the theorem is proved.
\end{proof}

\subsection{IP-clairvoyant Algorithm: \UCEQ}

In this section, we present an IP-clairvoyant algorithm called \UCEQ (Uniform Conservative
Equi-Partitioning). We show that knowledge about the instantaneous parallelism of the jobs does
help to improve the performance of an online algorithm. In particular, we prove that \UCEQ is
$O(1)$-competitive with respect to total flow time plus energy, even for jobs with arbitrary
release time.

\algref{uceq} describes the \UCEQ algorithm. As we can see, \UCEQ works similarly to \NEQUI, but it
never allocates more processors to a job than the job's instantaneous parallelism at any time.
Moreover, the speeds of the processors allocated to a job are assigned in a uniform manner.

\begin{algorithm}
\caption{\UCEQ}\label{alg:uceq}
\begin{algorithmic}[1]
\REQUIRE{total number $P$ of processors, number $n_t$ of active jobs at time $t$ and the
instantaneous parallelism $h_i^k$ of each active job $J_i$ at time $t$. } \ENSURE{number of
allocated processors and their speeds for each active job at time $t$.}

\IF {$P \ge n_t$} \STATE{allocate $a_i(t) = \min\{h_i^k, \floor{\frac{P}{n_t}}\}$ processors to
each active job $J_i$.} \STATE {set the speed of all $a_i(t)$ processors allocated to job $J_i$ to
be $s_i(t) =
\paren{\frac{1}{(\alpha-1)a_i(t)}}^{1/\alpha}$.} \ELSE
\STATE{allocate $a_i(t) = \frac{P}{n_t}$ fraction of a processor to each active job $J_i$.}
\STATE{set the speed of all processors to be $s(t) =
\paren{\frac{n_t}{(\alpha-1)P}}^{1/\alpha}$.} \ENDIF
\end{algorithmic}
\end{algorithm}

To analyze the performance of \UCEQ, we say that job $J_i$ is \emph{satisfied} at time $t$ if
$a_i(t) = h_i^k$, and that it is \emph{deprived} if $a_i(t) < h_i^k$. We see that at time $t$, a
job $J_i$ scheduled by \UCEQ has execution rate $\Gamma_i^k(t) =
\frac{a_i(t)^{1-1/\alpha}}{(\alpha-1)^{1/\alpha}}$ and consumes power $u_i(t) =
\frac{1}{\alpha-1}$. Therefore, the overall power consumption is given by $u_t =
\frac{n_t}{\alpha-1}$. The following theorem gives the performance of \UCEQ for jobs with arbitrary
released times.

\begin{theorem}\label{thm:iprelease}
\UCEQ is $\paren{\max\{\frac{4\alpha^2}{\alpha-1}, 4^{\alpha}\alpha\} + 2\alpha}$-competitive with
respect to total flow time plus energy for any set of parallel jobs.
\end{theorem}

\begin{proof}
As with the analysis of \NEQUI, we prove the competitiveness of \UCEQ using amortized local
competitiveness argument with the same potential function as in \eqref{potential}, but with $\eta$
now set to $\eta = \frac{\eta'}{P^{1-1/\alpha}}$, where $\eta' =
\frac{4\alpha^2}{(\alpha-1)^{1-1/\alpha}}$.  Clearly, the boundary, arrival and completion
conditions continue to hold. We now show that the execution of any job set under \UCEQ (\UC for
short) satisfies the running condition $\frac{dG_{\UC}(\jobset(t))}{dt} + \frac{d\Phi(t)}{dt} \le
c_1 \cdot \frac{dG_{\OPT}(\jobset^*(t))}{dt} + c_2 \cdot \frac{dG_1^*(\jobset(t))}{dt}$, where $c_1
= \max\{\frac{4\alpha^2}{\alpha-1}, 4^{\alpha}\alpha\}$ and $c_2 = 2\alpha$.

Following the proof of \thmref{release}, we have $\frac{dG_{\UC}(\jobset(t))}{dt} =
\frac{\alpha}{\alpha-1}n_t$, $\frac{dG_{\OPT}(\jobset^*(t))}{dt} = n_t^* +
\sum_{j=1}^{P}\paren{s_j^*}^{\alpha}$, and $\frac{dG_1^*(\jobset(t))}{dt} \ge
\frac{\alpha}{(\alpha-1)^{1-1/\alpha}} \sum_{J_i\in
  \jobset_S(t)}\frac{\Gamma_i^{k}(t)}{\paren{h_i^k}^{1-1/\alpha}} =
\frac{\alpha}{\alpha-1}(1-x_t)n_t$, where $x_t$ is the deprived ratio as defined earlier. The rate
of change $\frac{d\Phi(t)}{dt}$ for the potential function $\Phi(t)$ at time $t$ can be shown to
satisfy
\begin{eqnarray}\label{ineq:phi2}
\frac{d\Phi(t)}{dt} &\le& \frac{\eta'}{P^{1-1/\alpha}}
\paren{-\sum_{i=1}^{n_t^D}i^{1-1/\alpha}\cdot \Gamma_i^{k}(t) + n_t^{1-1/\alpha}
\sum_{j=1}^{P}s_j^* + n_t^* \sum_{i=1}^{n_t}\paren{i^{1-1/\alpha} - (i-1)^{1-1/\alpha}}
\Gamma_i^{k}(t)} \nonumber \\
&\le& \eta' \paren{-\frac{x_t^2}{4(\alpha-1)^{1/\alpha}}n_t +
\frac{\lambda}{\alpha}\sum_{j=1}^{P}\paren{s_j^*}^{\alpha} +
\frac{1-1/\alpha}{\lambda^{1/(\alpha-1)}}n_t + \frac{n_t^*}{(\alpha-1)^{1/\alpha}}}, \nonumber
\end{eqnarray}
where $\lambda = 4^{\alpha-1}(\alpha-1)^{1-1/\alpha}$. Substituting these bounds into the running
condition, we see that it is satisfied for all valid values of $x_t$. Hence, the theorem is proved.
\end{proof}

\thmref{iprelease} shows that \UCEQ is $O(1)$-competitive for total flow time plus energy, since
$\alpha$ can be considered as a constant with respect to $P$. The competitive ratio, however, is
still exponential in $\alpha$. The following theorem shows that, for jobs with identical release
time, the competitive ratio can be further improved to be strictly smaller than $6$ regardless of
the value of $\alpha$.

\begin{theorem}\label{thm:ipbatch}
\UCEQ is $\paren{2^{2-1/\alpha} + 2}$-competitive with respect to total flow time plus energy for
any set of parallel jobs with identical release time.
\end{theorem}

\begin{proof}
Similarly to the proof for \NEQUI, we prove the competitiveness of \UCEQ for jobs with identical
release time using the local competitive argument. In particular, we show that any job set
scheduled by \UCEQ satisfies the running condition $\frac{dG_{\UC}(\jobset(t))}{dt} \le 2 \cdot
\frac{dG_1^*(\jobset^*(t))}{dt} + 2^{2-1/\alpha} \cdot \frac{dG_2^*(\jobset(t))}{dt}$. Again, we
have $\frac{dG_{\UC}(\jobset(t))}{dt} = \frac{\alpha}{\alpha-1}n_t$ and
$\frac{dG_1^*(\jobset(t))}{dt} \ge \frac{\alpha}{\alpha-1}(1-x_t)n_t$ from the proof of
\thmref{iprelease}. Also, by following the proof of \thmref{batch}, we can get
$\frac{dG_2^*(\jobset(t))}{dt} \ge
\frac{\alpha}{\paren{(\alpha-1)P}^{1-1/\alpha}}\sum_{i=1}^{n_t^D}i^{1-1/\alpha}\cdot \Gamma_i^k(t)
\ge \frac{\alpha}{\alpha-1}\cdot \frac{x_t^2n_t}{2^{2-1/\alpha}}$.  Using these bounds, the desired
running condition can be satisfied for all valid values of $x_t$.
\end{proof}

\UCEQ significantly improves upon any non-clairvoyant algorithm with respect to total flow time
plus energy.  This is essentially due to \UCEQ not wasting any energy yet still guaranteeing a
sufficient execution rate for the jobs. Since it is known that non-clairvoyant algorithms perform
similarly to the IP-clairvoyant ones with respect to total flow time alone
\cite{DengGuBr00,EdmondsChBr03,KalyanasundaramPr00,EdmondsPr09}, \thmreftwo{iprelease}{ipbatch}
show the importance of even partial clairvoyance when energy is also of concern. Moreover, since
\UCEQ is aware of the instantaneous parallelism of the jobs, uniform speed scaling is sufficient to
ensure its competitiveness. Therefore, compared to the non-clairvoyant algorithm \NEQUI, which
requires non-uniform speed scaling, \UCEQ may be more feasible in practice, especially when
scheduling for jobs whose parallelism does not change frequently. Lastly, note that non-uniform
speed scaling is not beneficial in the IP-clairvoyant setting. Indeed, using non-uniform speeds can
only degrade an algorithm's performance, since generally less energy will be consumed at the same
execution rate when using a uniform speed \cite{YaoDeSh95}.

\section{Makespan Plus Energy}\label{sec:makespan}

In this section, we consider the objective of minimizing makespan plus energy. In particular, we
propose an IP-clairvoyant algorithm called \PF, and show that it is
$O(\ln^{1-1/\alpha}P)$-competitive for any set of parallel jobs regardless of their release time.
We also show that this ratio is asymptotically optimal for any IP-clairvoyant algorithm.

\subsection{Performance of the Optimal}

To bound the performance of \PF, we first derive a bound on the performance of the optimal offline
scheduler. It turns out that, for makespan plus energy, we need only focus on the case where all
jobs are released together, as we will show in \secref{makespanrelease}, adding release time to the
jobs will increase the competitive ratio of our algorithm by at most a constant factor in the worst
case.

In the following lemma, we show that to minimize makespan plus energy for jobs with identical
release time, the optimal scheduler always maintains a constant total power of $\frac{1}{\alpha-1}$
at any time. A similar property was proven in \cite{PruhsVaUt08} for the makespan minimization
problem with a total energy budget.

\begin{lemma}\label{lem:equalpower}
Given a set of jobs with identical release time, the optimal scheduler will execute the jobs with a
constant total power of $\frac{1}{\alpha-1}$ at any time during the execution.
\end{lemma}

\begin{proof}
We prove the lemma by contradiction.

Consider an interval $\dt$ during which the speeds of all processors, denoted by $(s_1, s_2,
\cdots, s_P)$, remain unchanged in the optimal schedule. The makespan plus energy incurred when
executing this portion of the job set is given by $H = \dt (1 + u)$, where $u =
\sum_{j=1}^{P}s_j^{\alpha}$ is the power consumption of all the processors during $\dt$. Suppose
that $u \neq \frac{1}{\alpha-1}$. We will show that by modifying the power consumption, we can
reduce the overall makespan plus energy.

Specifically, the modified schedule executes the same portion of the job set by running the $j$'th
processor at speed $k\cdot s_j$, where $k = \paren{\frac{1}{(\alpha-1)u}}^{1/\alpha}$. This portion
will then finish in $\frac{\dt}{k}$ time, and consumes $\frac{1}{\alpha-1}$ power at any time
during this interval. The new makespan plus energy incurred when executing this portion of the job
set is $H' = \frac{\dt}{k} (1 + \frac{1}{\alpha-1}) = \frac{\alpha}{(\alpha-1)^{1-1/\alpha}}\dt
u^{1/\alpha}$. Hence, we have $\frac{H}{H'} = \frac{(\alpha-1)^{1-1/\alpha}}{\alpha} \cdot
\frac{1+u}{u^{1/\alpha}} > 1$, i.e., $H>H'$, since $\frac{1+x}{x^{1/\alpha}}$ is uniquely minimized
at $x = \frac{1}{\alpha-1}$ for all $x>0$ and $u \neq \frac{1}{\alpha-1}$. While the costs incurred
when executing other portions of the job set are unchanged, the modified schedule incurs a strictly
less makespan plus energy. This contradicts the fact that the original schedule is optimal.
\end{proof}

Intuitively, since the jobs contribute $1$ towards the makespan part of the objective function at
any time during their execution, \lemref{equalpower} implies that the optimal strategy provides a
balanced contribution towards the power consumption part of the objective at all time.\footnote{A
similar argument will show that the optimal strategy for the objective of total flow time plus
energy maintains a power consumption of $\frac{n_t}{\alpha-1}$ at any time $t$, where $n_t$ is the
number of active jobs at time $t$. \lemref{lowerbounds} shows that the optimal scheduler indeed
satisfies this property.} However, this result only holds when the jobs have identical release
time. For jobs with arbitrary release time, speeding up the execution whenever the power is less
than $\frac{1}{\alpha-1}$ may not be helpful, for the subsequent jobs may not have arrived yet to
fill in the gap. Therefore, the optimal power consumption in this case would be upper-bounded by
$\frac{1}{\alpha-1}$.

\lemref{equalpower} shows that the total power consumed by all the jobs should be constant over
time in the optimal schedule, provided that all jobs are released at the same time. The following
lemma shows that the optimal scheduler also uses a constant power throughout the execution of any
individual job, provided that sufficient processors are available.

\begin{lemma}\label{lem:equalphase}
Given a set of jobs, suppose that there are sufficient processors available to satisfy all jobs at
all times, i.e., $P \ge \sum_{i = 1..n} h_i^{max}$, where $h_i^{max}=\max_{k = 1..k_i}h_i^k$. Then,
the optimal scheduler will allocate a constant power to any individual job throughout its execution
lifetime.
\end{lemma}

\begin{proof}
Again, we prove the lemma by contradiction.

Since there are sufficient processors in the system, by the convexity of the power function, the
optimal scheduler should allocate exactly $h_i^k$ processors of the same speed to each phase
$J_i^k$. Now, suppose that there exist two phases from the same job, to which the optimal scheduler
does not allocate the same power, and let $\langle w_1, h_1\rangle$ and $\langle w_2, h_2\rangle$
denote the work-parallelism pairs of these two phases, respectively. Thus, we have $h_1s_1^\alpha
\neq h_2s_2^\alpha$, where $s_1$ and $s_2$ denote the speeds of the processors allocated to the two
phases. We will show that, by modifying the power allocations for the two phases, we can reduce the
overall energy consumption while maintaining the execution time of the job.

Let $t_1 = \frac{w_1}{h_1s_1}$ and $t_2 = \frac{w_2}{h_2s_2}$. The overall execution time and
energy consumption of the two phases are given by $T = t_1 + t_2$ and $E = t_1\cdot h_1s_1^\alpha +
t_2\cdot h_2s_2^\alpha$, respectively. Let $s'_1$ and $s'_2$ denote the speeds used for the two
phases in the modified schedule, and we will make sure that their power consumptions are identical,
i.e.,
\begin{equation}\label{eq:one}
h_1(s'_1)^\alpha = h_2(s'_2)^\alpha = u.
\end{equation}
Moreover, to maintain the same execution time for the job, the processor speeds in the modified
schedule should also satisfy $\frac{w_1}{h_1s'_1} + \frac{w_2}{h_2s'_2} = t_1 + t_2$. This gives us
\begin{equation}\label{eq:two}
1 = \beta\cdot \frac{s_1}{s'_1} + (1-\beta)\cdot \frac{s_2}{s'_2},
\end{equation}
where $\beta = \frac{t_1}{t_1+t_2}$. Solving \eqreftwo{one}{two}, we get $u = \paren{\beta\cdot
h_1^{1/\alpha} s_1 + (1-\beta) \cdot
  h_2^{1/\alpha} s_2}^\alpha$, and $s'_1 =
\paren{\frac{u}{h_1}}^{1/\alpha}$ and $s'_2 =
\paren{\frac{u}{h_2}}^{1/\alpha}$. The total energy consumption for the two phases in the modified schedule
is then given by
\begin{eqnarray*}
E' &=& \paren{\frac{w_1}{h_1s'_1} +
  \frac{w_2}{h_2s'_2}}u \\
  &=& \paren{t_1\cdot \frac{s_1}{s'_1} + t_2\cdot
  \frac{s_2}{s'_2}}u \\
  &=& \paren{t_1\cdot h_1^{1/\alpha}s_1 + t_2\cdot
  h_2^{1/\alpha}s_2}u^{1-1/\alpha} \\
  &=& (t_1+t_2)\paren{\beta\cdot
  h_1^{1/\alpha} s_1 + (1-\beta) \cdot h_2^{1/\alpha} s_2}^\alpha \\
  &<& (t_1+t_2)\paren{\beta\cdot h_1 s_1^\alpha + (1-\beta) \cdot
  h_2s_2^\alpha} \\
  &=& t_1\cdot h_1s_1^\alpha + t_2\cdot h_2s_2^\alpha \\
  &=& E.
\end{eqnarray*}

The inequality is because $x^\alpha$ is strictly convex for $\alpha>1$, and $h_1^{1/\alpha}s_1 \neq
h_2^{1/\alpha}s_2$ according to our assumption. Hence, the modified schedule consumes strictly less
energy while having the same makespan. This contradicts the fact that the original schedule is
optimal.
\end{proof}

We now give lower bounds on the performance of the optimal offline scheduler. In contrast to the
total flow time plus energy, where the completion time of each job contributes to the overall
objective function, the makespan for a set of jobs only depends on the completion time of the last
job. Hence, the other jobs only contribute to the energy consumption part of the objective, and can
therefore be slowed down to consume less energy and improve the overall performance. Based on this
observation as well as \lemreftwo{equalpower}{equalphase}, we derive the following two lower
bounds.

\begin{lemma}\label{lem:optimalH}
The optimal makespan plus energy for any set $\jobset$ of jobs with identical release time
satisfies $H_{\OPT}(\jobset) \ge \max\{H_1^*(\jobset), H_2^*(\jobset)\}$, where
\begin{eqnarray}
H_1^*(\jobset) &=& \frac{\alpha}{(\alpha-1)^{1-1/\alpha}}
\cdot\frac{\sum_{i=1}^{n}w(J_{i})}{P^{1-1/\alpha}},
\\
H_2^*(\jobset) &=& \frac{\alpha}{(\alpha-1)^{1-1/\alpha}}\cdot
\paren{{\sum_{i=1}^{n}x(J_{i})^{\alpha}}}^{1/\alpha},
\end{eqnarray}
and where $w(J_i)$ and $x(J_i)$ denote the work and the unit-time span of job $J_i$ respectively,
and $P$ is the total number of processors.
\end{lemma}

\begin{proof}
By \lemref{equalpower}, the optimal scheduler at any time $t$ consumes power $u_t^* =
\frac{1}{\alpha-1}$. Hence, the energy consumption $E_{\OPT}(\jobset)$ under the optimal schedule
satisfies $E_{\OPT}(\jobset) = \int_{0}^{M_{\OPT}(\jobset)}u_t^*dt =
\frac{1}{\alpha-1}M_{\OPT}(\jobset)$. The optimal makespan plus energy is then given by
$H_{\OPT}(\jobset) = M_{\OPT}(\jobset) + E_{\OPT}(\jobset) =
\frac{\alpha}{\alpha-1}M_{\OPT}(\jobset)$. Therefore, we only focus on the makespan in the
following.

To get the first lower bound, we observe that the maximum execution rate on $P$ processors using a
total power of $\frac{1}{\alpha-1}$ is achieved when all processors run at the same speed of $s =
\paren{\frac{1}{(\alpha-1)P}}^{1/\alpha}$. Since the total work of all
jobs in the job set is $\sum_{i=1}^{n}w(J_{i})$, the optimal makespan satisfies $M_{\OPT}(\jobset)
\ge  \frac{\sum_{i=1}^{n}w(J_{i})}{Ps} =  (\alpha-1)^{1/\alpha}\cdot
\frac{\sum_{i=1}^{n}w(J_{i})}{P^{1-1/\alpha}}$.

For the second lower bound, we give infinite number of processors to the optimal scheduler, which
can only improve its performance. Thus, all jobs can be satisfied at all times, and the optimal
scheduler will allocate exactly $h_{i}^{k}$ processors of the same speed $s_{i}^k$ to each phase
$J_i^k$. By \lemref{equalphase}, the power consumption of any job is constant over all phases,
i.e., for each job $J_i\in\jobset$, we have $h_i^1(s_i^1)^\alpha = h_i^k(s_i^k)^\alpha$ for all $1
\le k \le k_i$. Hence, the execution rate for phase $J_i^k$ is given by $h_i^ks_i^k =
(h_i^k)^{1-1/\alpha}(h_i^1)^{1/\alpha}s_i^1$, and so the completion time of job $J_i$  satisfies
$c_i = \sum_{k=1}^{k_i} \frac{w_i^k}{h_i^ks_i^k} = \frac{x(J_i)}{(h_i^1)^{1/\alpha}s_i^1}$.
Moreover, the optimal scheduler will finish all the jobs at the same time, since otherwise slowing
down the jobs that finish early will reduce the energy without increasing makespan. This gives
$\frac{x(J_1)}{(h_1^1)^{1/\alpha}s_1^1} = \frac{x(J_i)}{(h_i^1)^{1/\alpha}s_i^1}$ for all $1\le
i\le n$.  By \lemref{equalpower}, the total power is constant when the optimal scheduler starts to
execute the first phase of all jobs, i.e., $\sum_{i=1}^{n}h_i^1(s_i^1)^{\alpha} =
\frac{1}{\alpha-1}$. Solving all these equations gives us $s_i^1 =
\paren{\frac{1}{\alpha-1}}^{1/\alpha} \cdot \frac{x(J_i)}{(h_i^1)^{1/\alpha}}\cdot
\frac{1}{\paren{\sum_{i=1}^{n}x(J_i)^\alpha}^{1/\alpha}}$.  The optimal makespan is thus
$M_{\OPT}(\jobset) = c_i = \frac{x(J_i)}{(h_i^1)^{1/\alpha}s_i^1} = (\alpha-1)^{1/\alpha}\cdot
\paren{\sum_{i=1}^{n}x(J_i)^\alpha}^{1/\alpha}$.
\end{proof}

\subsection{IP-clairvoyant Algorithm: \PF}

We now present an IP-clairvoyant algorithm \PF (Work-Conserving Equal-Power) for any set of
parallel jobs. \algref{PF} describes its details.

There are two operating modes in \PF, namely Work-Conserving (WC) and Equal-Power (EP). Whenever
the sum of the instantaneous parallelism of all active jobs exceeds the total number of processors,
the algorithm enters WC mode. In this mode, the $P$ processors can be allocated in any manner as
long as all processors are assigned and no job receives more processors than its instantaneous
parallelism. This can be achieved by any work-conserving algorithm, such as Proportional Allocation
\cite{BrechtGu96} or Dynamic Equi-partitioning \cite{BrechtDeGu95,DengGuBr00}. All processors in
this mode share the same speed. On the other hand, when the total instantaneous parallelism of all
active jobs is not more than the total number of processors, the algorithm enters EP mode, in which
each job receives exactly the same number of processors as its instantaneous parallelism and all
jobs consumes the same power.

\begin{algorithm}
\caption{\PF}\label{alg:PF}
\begin{algorithmic}[1]\normalsize
\REQUIRE{total number $P$ of processors and the instantaneous parallelism $h_i^k$ of each active
job $J_i \in \jobset(t)$ at time $t$. } \ENSURE{number of allocated processors and their speeds for
each active job at time $t$.}

\IF {$\sum_{J_i\in \jobset(t)}h_i^k > P$}\STATE {allocate $a_i(t) \le h_i^k$ processors to each job
$J_i \in \jobset(t)$ subject to $\sum_{J_i\in \jobset(t)}a_i(t) = P$.} \STATE{set the speed of all
$P$ processors to be $s(t) =
\paren{\frac{1}{(\alpha-1)P}}^{1/\alpha}$.} \ELSE \STATE {allocate $a_i(t) = h_i^k$ processors to each
job $J_i \in \jobset(t)$;} \STATE{set the speed of all $a_i(t)$ processors allocated to job $J_i$
to be $s_i(t)=\paren{\frac{1}{(\alpha-1)a_i(t)n_t}}^{1/\alpha}$.} \ENDIF
\end{algorithmic}
\end{algorithm}

Similarly to the optimal offline scheduler, we can see from \algref{PF} that \PF consumes a total
power of $u_t = \frac{1}{\alpha-1}$ at any time $t$. Hence, the overall energy consumption
satisfies $E(\jobset) = \frac{1}{\alpha-1}M(\jobset)$, and the makespan plus energy is given by
$H(\jobset) = \frac{\alpha}{\alpha-1}M(\jobset)$. For convenience, we drop the algorithm subscript
in this section and let $H(\jobset)$ denote $H_{\PF}(\jobset)$, since \PF is the only algorithm we
study for this objective.

\subsubsection{Performance for Jobs with Identical Release Time}\label{sec:makespanbatch}

Before analyzing the performance of \PF for jobs with identical release time, we first define some
notations and prove a useful lemma. For each job $J_i$, let $J_i^{wc}$ and $J_i^{ep}$ denote the
portions of the job executed under WC and EP modes, respectively. Moreover, we define $\jobset^{wc}
= \{J_i^{wc}: J_i\in \jobset\}$ and $\jobset^{ep} = \{J_i^{ep} : J_i\in \jobset\}$. The following
lemma shows that all jobs in $\jobset^{ep}$ reduce their unit-power span at the same rate.

\begin{lemma}\label{lem:spanreduce}
Suppose that \PF schedules a set of jobs with identical release time. Then at any time $t$ in EP
mode, the unit-power span of any active job in $\jobset^{ep}$ is reduced at the rate of
$\paren{\frac{1}{(\alpha-1)n_t}}^{1/\alpha}$ regardless of its instantaneous parallelism, where
$n_t$ is the total number of active jobs at time $t$.
\end{lemma}

\begin{proof}
At any time $t$ when \PF is in EP mode, let $h_i^k$ denote the instantaneous parallelism of active
job $J_i\in \jobset^{ep}$. According to \algref{PF}, the work of the job is reduced at a rate of
$\frac{dw(J_i^{ep})}{dt} = a_i(t)s_i(t) =
\frac{(h_i^k)^{1-1/\alpha}}{\paren{(\alpha-1)n_t}^{1/\alpha}}$. Hence, by definition, the
unit-power span of the job is reduced at a rate of $\frac{dx(J_i^{ep})}{dt} =
\frac{dw(J_i^{ep})}{dt} \cdot \frac{1}{(h_i^k)^{1-1/\alpha}} =
\paren{\frac{1}{(\alpha-1)n_t}}^{1/\alpha}$.
\end{proof}

The following theorem shows the competitive ratio of \PF for jobs with identical release time.

\begin{theorem}\label{thm:pf}
\PF is $\Theta(\ln^{1-1/\alpha}P)$-competitive with respect to makespan plus energy for any set of
jobs with identical release time, where $P$ is the total number of processors.
\end{theorem}

\begin{proof}
We again only focus on the makespan $M(\jobset)$ of the job set scheduled by \PF, since the
makespan plus energy satisfies $H(\jobset) = \frac{\alpha}{\alpha-1}M(\jobset)$. We separately
bound the time $M_{wc}(\jobset)$ when the algorithm is in WC mode and the time $M_{ep}(\jobset)$
when the algorithm is in EP mode. Obviously, we have $M(\jobset) =
M_{wc}(\jobset)+M_{ep}(\jobset)$.

We first bound $M_{wc}(\jobset)$. According to \PF, the total execution rate for the active jobs at
any time $t$ in WC mode is given by $\frac{P^{1-1/\alpha}}{(\alpha-1)^{1/{\alpha}}}$. Since the
total work of all jobs in $\jobset^{wc}$ satisfies $\sum_{i=1}^{n}w(J_{i}^{wc}) \le
\sum_{i=1}^{n}w(J_{i})$, we have $M_{wc}(\jobset) \le
(\alpha-1)^{1/\alpha}\frac{\sum_{i=1}^{n}w(J_{i})}{P^{1-1/\alpha}}$.

We now bound $M_{ep}(\jobset)$ when the algorithm is in EP mode. Let $T$ denote the first time
instance when the algorithm enters EP mode, and let $m$ denote the number of active jobs at $T$,
i.e., $m = n_T$. Since the instantaneous parallelism of each active job is at least $1$ and all $m$
active jobs are satisfied at $T$, we have $m < P$. For convenience, rename the jobs in
$\jobset^{ep}$ in non-decreasing order of their unit-power span, i.e., $x(J_1^{ep}) \le x(J_2^{ep})
\le \cdots \le x(J_m^{ep})$. According to \lemref{spanreduce}, whenever the algorithm is in EP
mode, \PF will reduce the unit-power span of all the active jobs in $\jobset^{ep}$ at the same rate
of $\paren{\frac{1}{(\alpha-1)n_t}}^{1/\alpha}$. Thus, the jobs in $\jobset^{ep}$ will complete in
exactly the above order. Let $x(J_0^{ep}) = 0$.  Then we have $M^{ep}(\jobset) =
\sum_{i=1}^{m}\frac{x(J_i^{ep}) -
  x(J_{i-1}^{ep})}{\paren{\frac{1}{(\alpha-1)(m-i+1)}}^{1/\alpha}} =
(\alpha-1)^{1/\alpha} \sum_{i=1}^{m}\paren{{(m-i+1)^{1/\alpha} -
    (m-i)^{1/\alpha}}}x(J_i^{ep})$.  For convenience, define $c_i =
(m-i+1)^{1/\alpha} - (m-i)^{1/\alpha}$ for $1 \le i \le m$, so we have $c_i \le
\frac{1}{(m-i+1)^{1-1/\alpha}}$. Let $R = \sum_{i=1}^{m}x(J_i^{ep})^{\alpha}$, and subject to this
condition and the ordering of $x(J_i^{ep})$, we can show using Lagrange multipliers that
$\sum_{i=1}^{m}c_i\cdot x(J_i^{ep})$ is maximized when $x(J_i^{ep}) = R^{1/\alpha}\cdot
c_i^{\frac{1}{\alpha-1}}/\paren{\sum_{i=1}^{m}c_i^{\frac{\alpha}{\alpha-1}}}^{1/\alpha}$. Hence, we
have $M^{ep}(\jobset) \le (\alpha-1)^{1/\alpha} R^{1/\alpha}
\paren{\sum_{i=1}^{m}c_i^{\frac{\alpha}{\alpha-1}}}^{1-1/\alpha} \le
(\alpha-1)^{1/\alpha} R^{1/\alpha} H_m^{1-1/\alpha}$, where $H_m = 1 + 1/2 + \cdots + 1/m$ is the
$m$'th harmonic number.

The makespan plus energy of the job set scheduled under \PF thus satisfies $H(\jobset) \le
\frac{\alpha}{(\alpha-1)^{1-1/\alpha}}\cdot
\paren{\frac{\sum_{i=1}^{n}w(J_{i})}{P^{1-1/\alpha}} + R^{1/\alpha}
  H_m^{1-1/\alpha}}$. Thus, since $\sum_{i=1}^{n}x(J_i)^{\alpha} \ge
\sum_{i=1}^{m}x(J_i^{ep})^{\alpha} = R$, we get using \lemref{optimalH} that $H(\jobset) \le (1 +
H_m^{1-1/\alpha})\cdot H_{\OPT}(\jobset) = O(\ln^{1-1/\alpha}P) \cdot H_{\OPT}(\jobset)$.

To show that this ratio is asymptotically optimal for \PF, consider a set $\jobset$ of $P$
sequential jobs released at time $0$, where the $i$'th job has unit-power span $x(J_i) =
\frac{1}{(P-i+1)^{1/\alpha}}$. From \lemref{optimalH}, the optimal scheduler has makespan plus
energy $H_{\OPT}(\jobset)= \frac{\alpha}{(\alpha-1)^{1-1/\alpha}} H_P^{1/\alpha}$, where $H_P$ is
the $P$'th harmonic number. From the above proof, the performance of \PF is given by $H(\jobset) =
\frac{\alpha}{\alpha-1}M(\jobset) = \frac{\alpha}{(\alpha-1)^{1-1/\alpha}}
\sum_{i=1}^{P}\paren{{(P-i+1)^{1/\alpha} - (P-i)^{1/\alpha}}}x(J_{i}) \ge
\frac{\alpha}{(\alpha-1)^{1-1/\alpha}} \sum_{i=1}^{P}\frac{x(J_{i})}{\alpha (P-i+1)^{1-1/\alpha}} =
\frac{1}{(\alpha-1)^{1-1/\alpha}}H_P$. The competitive ratio of \PF in this case is thus
$\frac{H(\jobset)}{H_{\OPT}(\jobset)} \ge \frac{1}{\alpha}\cdot H_P^{1-1/\alpha} =
\Omega(\ln^{1-1/\alpha}P)$.
\end{proof}

\subsubsection{Performance for Jobs with Arbitrary Release Time}\label{sec:makespanrelease}

In this section, we show that \PF has the same asymptotic performance with respect to makespan plus
energy when the jobs can have arbitrary release time. In fact, the competitive ratio as compared to
the case with identical release time will increase by an additive factor of
$\frac{\alpha}{\alpha-1}$ in the worst case.

For convenience, we assume that the jobs in any job set $\jobset$ are renamed according to their
release time, i.e., $0 = r_1 \le r_2 \le \cdots \le r_n$. Obviously, the last release time $r_n$ is
a lower bound on the makespan plus energy of job set $\jobset$, i.e., $H_{\OPT}(\jobset)\ge r_n$.
Moreover, the two lower bounds shown in \lemref{optimalH} will continue to hold even though
\lemref{equalpower} cannot be applied to jobs with arbitrary release time. To see this, define a
corresponding job set $\jobset'$, which contains exactly the same set of jobs in $\jobset$ but with
the release time of all jobs set to $0$.  The optimal schedule for $\jobset$ is a valid schedule
for $\jobset'$, which implies $H_{\OPT}(\jobset) \ge H_{\OPT}(\jobset')$. Furthermore, the
corresponding jobs in $\jobset$ and $\jobset'$ share the same work and unit-power span, which gives
$H_1^*(\jobset) = H_1^*(\jobset')$ and $H_2^*(\jobset) = H_2^*(\jobset')$ according to definition.
Hence, we have $H_{\OPT}(\jobset) \ge H_{\OPT}(\jobset') \ge \max\{H_1^*(\jobset'),
H_2^*(\jobset')\} = \max\{H_1^*(\jobset), H_2^*(\jobset)\}$, where the second inequality follows by
\lemref{optimalH}.

The following theorem gives the performance of \PF for the general case.

\begin{theorem}\label{thm:pfrelease}
\PF is $\Theta(\ln^{1-1/\alpha}P)$-competitive with respect to makespan plus energy for any set of
parallel jobs, where $P$ is the total number of processors.
\end{theorem}

\begin{proof}
Using the notions of $t$-prefix and $t$-suffix introduced in \secref{notations}, we define
$\jobset(\la{r_n})$ to be the $r_n$-prefix of the job set $\jobset$, or the portion of the job set
completed before and on time $r_n$, under the schedule of \PF. Similarly, we define
$\jobset(\ra{r_n})$ to be the $r_n$-suffix of the job set $\jobset$ scheduled under \PF. Since all
jobs in $\jobset$ have arrived by time $r_n$, the makespan plus energy incurred by \PF for the
entire job set $\jobset$ is given by $H(\jobset) = H(\jobset(\la{r_n})) + H(\jobset(\ra{r_n}))$.

According to the definition of \PF, the makespan plus energy incurred by executing
$\jobset(\la{r_n})$ is $H(\jobset(\la{r_n})) = \frac{\alpha}{\alpha-1}M(\jobset(\la{r_n})) =
\frac{\alpha}{\alpha-1}r_n$. By the proof of \thmref{pf}, the makespan plus energy incurred by
executing $\jobset(\ra{r_n})$ is $H(\jobset(\ra{r_n})) \le H_1^*(\jobset(\ra{r_n})) +
H_P^{1-1/\alpha}\cdot H_2^*(\jobset(\ra{r_n}))$, where $H_P$ is the $P$'th harmonic number.
Apparently, we have $H_1^*(\jobset(\ra{r_n})) \le H_1^*(\jobset)$ and $H_2^*(\jobset(\ra{r_n})) \le
H_2^*(\jobset)$ by the definitions of $H_1^*$ and $H_2^*$. Thus, based on the three lower bounds
for the makespan plus energy, the total cost of \PF satisfies $H(\jobset) \le
\paren{\frac{\alpha}{\alpha-1} + 1 + H_P^{1-1/\alpha}}H_{\OPT}(\jobset) =
O(\ln^{1-1/\alpha}P)\cdot H_{\OPT}(\jobset)$.
\end{proof}

From the proofs of \thmreftwo{pf}{pfrelease}, we can see that the cost of the \PF algorithm when
executing in WC mode can be amortized against the cost of the optimal offline scheduler. Hence, the
competitive ratio of \PF comes primarily from the execution of the jobs in EP mode, during which
sufficient processors are available. The strategy of \PF in this mode is to give each active job
the same amount of power, thus reducing the jobs' unit-power span at the same rate. Without knowing
the jobs' remaining characteristics, this strategy seems to provide an optimal solution for any
online algorithm. In the next section, we confirm the intuition by proving a matching lower bound
for any IP-clairvoyant algorithm, which shows that \PF is asymptotically optimal with respect to
makespan plus energy.

\subsection{Lower Bound for Any IP-clairvoyant Algorithm}

In this section, we present an $\Omega(\ln^{1-1/\alpha}P)$ lower bound on the competitiveness of
any IP-clairvoyant algorithm with respect to makespan plus energy. The idea is to show that,
without any knowledge about the remaining characteristics of the jobs, the \PF algorithm will
perform no worse than any IP-clairvoyant algorithm under a particular adversarial strategy. This is
achieved by transforming any IP-clairvoyant schedule for a set of sequential jobs into a \PF
schedule without increasing the overall cost.

Before proving the lower bound, we first consider the following scenario, which represents an
intermediate state of the jobs during the transformation process.

\begin{scenario}\label{sce:twojobs}
Suppose that there are two sequential jobs with the same total work. Each job is divided into $m$
segments, where $m\ge 1$, and the corresponding segments of the two jobs also have the same work.
Each job is to be executed independently on a processor, starting at time $0$ and ending at time
$T$, where $T > 0$. Each segment of a job is to be executed with the same speed, but different
segments of the same job or the corresponding segments of different jobs may be executed with
different speeds. Let \A denote any valid schedule in this scenario. For each job $J_i$ scheduled
by \A, where $i = 1,2$, let $t^j_i$ denote the completion time for the $j$'th segment of the job,
where $1\le j\le m$. Hence, we have $0 < t_i^1 < t_i^2 < \cdots < t_i^m = T$.
\end{scenario}

For the scenario described above, the following lemma transforms any valid schedule \A that
executes the two jobs differently into a more energy-efficient schedule that executes the two jobs
identically throughout execution.

\begin{lemma}\label{lem:adjust}
For any valid schedule \A satisfying \sceref{twojobs}, there exists a valid schedule \B that
executes the two jobs identically, i.e., using the same speed at any time during their execution,
and which consumes no more energy than \A. Moreover, the time $t^j$ when the $j$'th segment of both
jobs is completed in \B satisfies $\min\{t_1^j, t_2^j\} \le t^j\le \max\{t_1^j, t_2^j\}$ for each
$1\le j \le m-1$.
\end{lemma}

\begin{proof}
We prove the lemma by induction on the number $m$ of segments.

In the base case, we have $m = 1$. Since both jobs are started and completed at the same time,
their execution speeds are identical. Hence, the claim holds trivially.

For the inductive step, let $m \ge 1$ and suppose that the claim holds when the jobs consist of $K$
segments for each $1\le K \le m$. We will show that the claim also holds when the jobs have $K =
m+1$ segments. For convenience, let $t_1^0 = t_2^0 = 0$ and let $t_1^{m+1} = t_2^{m+1} = T$.
Without loss of generality, we can assume $t_1^1 \le t_2^1$ under schedule \A. Let $l$ denote the
smallest index that satisfies $t_1^l \ge t_2^l$ and $l \ge 1$. Note that $l = 1$ if $t_1^1 =
t_2^1$, and $l = m+1$ if $t_1^j < t_2^j$ for all $1\le j \le m$. Also, let $s_i^j$ denote the
execution speed for the $j$'th segment of job $J_i$. Our goal is to transform schedule \A by
adjusting the speeds $s_1^1$, $s_2^1$, $s_1^l$ and $s_2^l$ to achieve $t_1^j = t_2^j$ for some
$1\le j \le l-1$, while not increasing the total energy consumption. Then, the adjusted time
$t_1^j$ (or $t_2^j$) divides each job $J_i$ into two parts $J'_i$ and $J''_i$ with $j$ and $m+1-j$
segments, respectively. By the inductive hypothesis, there exists a more energy-efficient schedule
$\B'$ that executes $J'_1$ and $J'_2$ identically in $[0, t_1^j]$, and similarly there is a more
energy-efficient schedule $\B''$ that executes $J''_1$ and $J''_2$ identically in $[t_1^j, T]$.
Schedule \B is then obtained by combining $\B'$ and $\B''$. Now, to achieve $t_1^j = t_2^j$ for
some $1\le j \le l-1$, we distinguish two cases.

Case 1: $s_1^1\le s_1^l$. Since we assumed that $l$ is the smallest index to satisfy $t_1^l \ge
t_2^l$, we have $t_1^{l-1} < t_2^{l-1}$. As we also assumed $t_1^1 \le t_2^1$, we can observe that
$s_2^1 \le s_1^1 \le s_1^l \le s_2^l$. In this case, we can decrease $t_2^j$ for each $1 \le j \le
l-1$ by an infinitesimal amount of time $\Delta t$ by increasing speed $s_2^1$ and reducing speed
$s_2^l$, while keeping $s_2^1 \leq s_2^l$. By the convexity of the power function, the total energy
consumption will not increase. Repeat this process until we get $t_1^j = t_2^j$ for some $1\le j
\le l-1$, which is always possible due to the above observation.

Case 2: $s_1^1>s_1^l$. In this case, we can increase $t_1^j$ for each $1 \le j \le l-1$ by an
infinitesimal amount of time $\Delta t$ by reducing speed $s_1^1$ and increasing speed $s_1^l$.
Again, the total energy consumption will not increase, by the convexity of the power function.
Repeat this process until we get $t_1^j = t_2^j$ for some $1\le j \le l-1$ or $s_1^1=s_1^l$. In the
latter case, the situation can be handled by Case 1.

Observe that the speed adjustments in both cases make each pair of time instances $t_1^j$ and
$t_2^j$, for any $1\le j\le l-1$, shift toward each other. Hence, in the final schedule \B, we have
$\min\{t_1^j, t_2^j\} \le t^j\le \max\{t_1^j, t_2^j\}$ for each $1\le j \le m$. This completes the
proof of the lemma.
\end{proof}

Using \lemref{adjust}, we now prove a lower bound on the competitive ratio of any IP-clairvoyant
algorithm.

\begin{theorem}
Any IP-clairvoyant algorithm is $\Omega(\ln^{1-1/\alpha}P)$-competitive with respect to makespan
plus energy, where $P$ is the total number of processors.
\end{theorem}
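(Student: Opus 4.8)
The plan is to build a hard instance out of batched \emph{sequential} jobs only. For sequential jobs semi-clairvoyance carries no information — every job has instantaneous parallelism $1$ at every step, so only the remaining work is hidden — and since sequential jobs are a subclass of batched \PARSEQ jobs, a lower bound on this family matches \thmref{pf}. Concretely I would release $n=P$ sequential jobs at time $0$ on $P$ processors, and let the adversary reveal works \emph{adaptively}: the $j$-th job that the online algorithm \A completes is declared to have span (equivalently, work) $l_j=c_j^{1/(\alpha-1)}$, where $c_j=(P-j+1)^{1/\alpha}-(P-j)^{1/\alpha}$. This is precisely the span profile that is worst for \PF in the proof of \thmref{pf}, and one checks $l_1\le l_2\le\cdots\le l_P$. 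The rule is legitimate: inductively, at the instant the $(j-1)$-th job completes every still-active job has accumulated strictly less than $l_{j-1}<l_j$ work (perturb the $l_j$ slightly to break ties), so the adversary may simply wait until some active job's accumulated work first reaches $l_j$ and declare that job complete.

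Next I would bound the two costs. For the optimum, $\jobset$ is a batch of single-phase sequential jobs, so running each job on its own processor at the constant speed that makes all jobs finish simultaneously costs $\frac{\alpha}{(\alpha-1)^{1-1/\alpha}}(\sum_j l_j^{\alpha})^{1/\alpha}$ (the same calculation as for $\jobset_S$ in the proof of \lemref{optimalH}, which by that lemma is in fact optimal here, the span term dominating its maximum by the power-mean inequality); hence $H^*(\jobset)\le\frac{\alpha}{(\alpha-1)^{1-1/\alpha}}(\sum_j l_j^{\alpha})^{1/\alpha}$. For \A, by the rescaling of \lemref{equalpower} applied instant by instant (the factor at time $t$ depends only on \A's current total power) we may assume \A runs at total power $\frac{1}{\alpha-1}$ throughout, so $H_\A(\jobset)=\frac{\alpha}{\alpha-1}M_\A(\jobset)$; and since a sequential job uses at most one processor effectively while idling a job only delays the makespan, we may assume \A puts each active job on exactly one processor, so that with $k$ active jobs the total work-completion rate is at most $k^{1-1/\alpha}(\alpha-1)^{-1/\alpha}$ by the power-mean inequality.

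Then I would lower-bound \A's makespan. Under the adversary's rule \A completes the jobs one at a time with exactly $P-j+1$ of them active during the $j$-th inter-completion interval; the claim is that against this adversary no schedule beats the \emph{balanced} schedule that runs all active jobs at a common speed — letting any job get ahead is wasted effort, since the adversary completes whichever job is in the lead the moment its progress reaches the current target — and the balanced schedule has makespan exactly $(\alpha-1)^{1/\alpha}\sum_j c_j l_j=(\alpha-1)^{1/\alpha}\sum_j c_j^{\alpha/(\alpha-1)}$, by the same telescoping calculation that bounds $M''(\jobset)$ in the proof of \thmref{pf}, specialized to $m=P$ jobs with spans $l_1,\dots,l_P$. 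Since $l_j^{\alpha}=c_j l_j=c_j^{\alpha/(\alpha-1)}$, the ratio collapses: $\frac{H_\A(\jobset)}{H^*(\jobset)}\ge(\alpha-1)^{-1/\alpha}\frac{M_\A(\jobset)}{(\sum_j l_j^{\alpha})^{1/\alpha}}\ge\bigl(\sum_j c_j^{\alpha/(\alpha-1)}\bigr)^{1-1/\alpha}$, and using $c_j\ge\frac1\alpha(P-j+1)^{1/\alpha-1}$ (concavity of $x^{1/\alpha}$) gives $\sum_j c_j^{\alpha/(\alpha-1)}\ge\alpha^{-\alpha/(\alpha-1)}H_P$, hence $\frac{H_\A(\jobset)}{H^*(\jobset)}\ge\alpha^{-1}H_P^{1-1/\alpha}=\Omega(\ln^{1-1/\alpha}P)$.

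The hard part will be the claim that no online schedule beats the balanced one against this adaptive adversary. A naive per-job relaxation — each job reaches its target at rate at most $k^{1-1/\alpha}(\alpha-1)^{-1/\alpha}$ — is far too weak, as it ignores that the power budget is shared across the active jobs, so that rushing one job stalls the others while the adversary is free to cap whichever job is ahead. I expect to resolve it either by an exchange argument showing that equalizing the accumulated work of the active jobs never hurts against this adversary (so the balanced schedule is optimal), or by a symmetrization/averaging over which physical job carries which span — the jobs being indistinguishable to a semi-clairvoyant algorithm — which forces \A, in the worst case over assignments, to do no better than its equal-speed counterpart. Either way the quantitative core is the estimate $\sum_j c_j^{\alpha/(\alpha-1)}=\Theta(H_P)$ already used in the analysis of \PF.
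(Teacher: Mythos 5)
Your construction is essentially the paper's own: a batch of $P$ sequential jobs whose spans follow the harmonic-type profile that makes the \PF analysis tight, revealed adaptively so that the $j$-th job the algorithm finishes is declared to have the $j$-th smallest span, together with the observation that the optimum pays only $\Theta\bigl((\sum_j l_j^{\alpha})^{1/\alpha}\bigr)$ while the balanced schedule pays $\Theta(\sum_j c_j l_j)$. (The paper uses $l_j=(P-j+1)^{-1/\alpha}$ rather than $c_j^{1/(\alpha-1)}$; this is immaterial.) Your preliminary reductions are sound: normalizing \A to constant total power $\frac{1}{\alpha-1}$ via \lemref{equalpower} preserves the adversary's assignment because the rescaling is common to all processors at each instant, and one processor per active job is without loss of generality. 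But the proof is not complete: everything hinges on the claim that against this adaptive adversary no schedule beats the balanced one, and that is exactly the step you defer to a hoped-for exchange or symmetrization argument. As you observe yourself, the obvious relaxations fail --- the per-job speed cap gives only $M_\A=\Omega(l_P)$, and the aggregate rate bound $k^{1-1/\alpha}(\alpha-1)^{-1/\alpha}$ does not directly apply because the adversary's completion rule is driven by the \emph{maximum} progress among active jobs rather than the total. As written, the argument is a matching calculation attached to a conjecture.

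For comparison, the paper closes this step not by exchange or averaging but by a three-stage transformation of \A's schedule into \PF's without increasing cost: (i) replace each processor's speed profile by its time average, which by convexity of $s^{\alpha}$ keeps every completion time and can only reduce energy; (ii) slow every processor down to the smallest of these average speeds, which the paper argues the adversarial assignment places on the processor carrying the longest span, so that the makespan is unchanged while energy again only drops; (iii) apply \lemref{equalpower} to renormalize to constant power $\frac{1}{\alpha-1}$, at which point the schedule coincides with \PF. If you want to keep your outline you must supply a balancedness lemma of at least this strength; of your two suggested routes, note that symmetrizing over span-to-job assignments does not obviously collapse to the balanced schedule, since the average over assignments of an unbalanced schedule's cost is not readily compared to the adaptive worst case, whereas an exchange argument would essentially reproduce the paper's transformation.
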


\begin{proof}
Consider any set $\jobset$ of $P$ sequential jobs with identical release time and whose total work
satisfies $w(J_1) < w(J_2) < \cdots < w(J_P)$. Since the number of jobs is the same as the number
of processors, we can assume that any IP-clairvoyant algorithm \A assigns exactly one job to each
processor. Otherwise, we can always shift a job from a processor with two or more jobs to an idle
processor, which will not use any more energy while possibly reducing the makespan.  In the rest of
the proof, we will show that the \PF algorithm performs no worse than \A for any such job set
$\jobset$ under a certain adversarial strategy. Since \thmref{pf} showed a lower bound of
$\Omega(\ln^{1-1/\alpha}P)$ for \PF on a particular instance of $\jobset$, the same lower bound
holds for \A as well.

Since the only information an IP-clairvoyant algorithm has about a set of jobs is their
instantaneous parallelism, all the jobs are indistinguishable to \A.  Thus, the adversary is free
to choose which processor each job is assigned to.  In particular, the adversarial strategy is to
always assign job $J_i$, where $i = 1,2,\cdots,P$, to the processor that first completes $w(J_i)$
amount of work, with ties broken arbitrarily. For convenience, we assume that job $J_i$ is assigned
to the $i$'th processor. Now, to show $H_{\PF}(\jobset) \le H_{\A}(\jobset)$ under such an
adversary, we transform schedule \A to \PF step by step without increasing the total cost. For each
$i = 1, 2, \cdots, P$, we divide job $J_i$ into $i$ segments.  The $j$'th segment has work $w(J_j)
- w(J_{j-1})$, for $1\le j\le i$, and $w(J_0)$ is defined to be $0$. Let $t_i^j$ denote the
completion time of the $j$'th segment of job $J_i$ in schedule \A. By the adversarial strategy, we
have $t_i^i \le t_k^i$ for all $i \le k \le P$.

First, we construct schedule $\A'$ from \A by averaging the execution speed for each segment of
each job. By the convexity of the power function, the completion time of all the segments will
remain the same in $\A'$, but the energy consumption may be reduced. We then get schedule $\A''$
from $\A'$ by iteratively performing the following two-step transformation for each $i = P-1, P-2,
\cdots, 1$: (1) Slow down the execution of the last segment of job $J_i$ until its completion time
$t_i^i$ is equal to $t_{i+1}^{i}$. (2) Apply \lemref{adjust} to get a potentially more
energy-efficient schedule that executes the first $i$ segments of all jobs in $\{J_i, \cdots,
J_P\}$ identically. Note that, after each iteration $i$, the corresponding segments of all jobs in
$\{J_i, \cdots, J_P\}$ will be completed at the same time, so that the first $i-1$ segments of them
can be collectively considered as a single job in the next iteration when applying \lemref{adjust}
in step (2). Also notice that, for each $j = 1, \cdots, i-1$ after iteration $i$, the completion
time $t^j$ for the $j$'th segment of all jobs in $\{J_i, \cdots, J_P\}$ satisfies $t^j \ge t_j^j$
by \lemref{adjust} and the adversarial strategy, so that we can apply step (1) in the subsequent
iterations. Therefore, at the end of the last iteration, the corresponding segments of all jobs are
aligned. Moreover, schedule $\A''$ apparently has the same makespan as $\A'$ but may consume less
energy. Now, we apply \lemref{equalpower} to construct a schedule \B from $\A''$ such that it
consumes constant total power $\frac{1}{\alpha-1}$ at any time, and has $H_{\B}(\jobset) \le
H_{{\A}''}(\jobset)$. By observing that \B is identical to \PF, the proof is complete.
\end{proof}

\section{Discussions and Conclusion}\label{sec:discussion}

In this paper, we considered energy-efficient scheduling for parallel jobs on multiprocessor
systems. We have given state-of-the-art results for the objective of total flow time plus energy in
both non-clairvoyant and IP-clairvoyant settings. Moreover, we have studied, for the first time in
the literature, makespan plus energy as an objective function. Tight bounds have been proven in
this case under the IP-clairvoyant setting.

As mentioned previously, the \MULTI algorithm proposed by Chan, Edmonds and Pruhs \cite{ChanEdLa11}
has the same upper and lower bounds as our \NEQUI algorithm with respect to total flow time plus
energy. However, their results are based on a different execution model than ours. It would be
interesting to further study the relationship between the two models, and to close the gap between
the upper and lower bounds for arbitrarily released jobs under either model.

\begin{table}[t]\label{tab:comparison}
\caption{Competitive ratios of non-clairvoyant and IP-clairvoyant algorithms with respect to total
response time plus energy and makespan plus energy.} \centering
\begin{tabular}{| c | c | c |}
\hline
& Non-clairvoyant & IP-clairvoyant \\
\hline
Total response time plus energy & $\Omega(\ln^{1/\alpha}P)$ & $O(1)$ \\
\hline
Makespan plus energy & \emph{Open problem} & $\Omega(\ln^{1-1/\alpha}P)$ \\
\hline
\end{tabular}
\end{table}

For the objective of makespan plus energy, we have studied the performance of IP-clairvoyant
algorithms. The natural question is to consider non-clairvoyant scheduling. Previous studies have
shown that, for minimizing makespan alone, a 2-competitive algorithm exists in the IP-clairvoyant
setting \cite{BrechtDeGu95}, whereas any non-clairvoyant algorithm is at least $\Omega(\ln n/\ln\ln
n)$-competitive \cite{RobertSc07}. Moreover, by comparing the known competitive ratios of
IP-clairvoyant and non-clairvoyant algorithms with respect to both objective functions as shown in
Table 2, we conjecture that minimizing makespan plus energy is more difficult than minimizing total
flow time plus energy, and hence is likely to incur a larger lower bound than
$\Omega(\ln^{1/\alpha}P)$ in the non-clairvoyant setting.


\begin{thebibliography}{10}

\bibitem{AgrawalLeHe08}
K.~Agrawal, C. E.~Leiserson, Y.~He, and W-J.~Hsu.
\newblock Adaptive work-stealing with parallelism feedback.
\newblock {\em ACM Transactions on Computer Systems}, 26(3):7:1--7:32, 2008.

\bibitem{Albers09}
S.~Albers.
\newblock Energy-efficient algorithms.
\newblock {\em Communications of the ACM}, 53(5):86--96, 2010.

\bibitem{AlbersFu07}
S.~Albers and H.~Fujiwara.
\newblock Energy-efficient algorithms for flow time minimization.
\newblock {\em ACM Transactions on Algorithms}, 3(4):1--17, 2007.

\bibitem{AndrewWiTa09}
L.~L. Andrew, A.~Wierman, and A.~Tang.
\newblock Optimal speed scaling under arbitrary power functions.
\newblock {\em ACM SIGMETRICS Performance Evaluation Review}, 37(2):39--41,
  2009.

\bibitem{AndrewLiWi10}
L.~L. Andrew, M.~Lin, and A.~Wierman.
\newblock Optimality, fairness, and robustness in speed scaling designs.
\newblock {\em ACM SIGMETRICS Performance Evaluation Review}, 38(1):37--48,
  2010.

\bibitem{BansalChPr09}
N.~Bansal, H.-L. Chan, and K.~Pruhs.
\newblock Speed scaling with an arbitrary power function.
\newblock {\em ACM Transactions on Algorithms}, 9(2):18, 2013.

\bibitem{BansalPrSt09}
N.~Bansal, K.~Pruhs, and C.~Stein.
\newblock Speed scaling for weighted flow time.
\newblock {\em SIAM Journal on Computing}, 39(4):1294--1308, 2009.

\bibitem{BecchettiLeMa04}
L.~Becchetti and S.~Leonardi and A.~Marchetti-Spaccamela and K.~Pruhs.
\newblock Semi-clairvoyant scheduling.
\newblock {\em Theoretical Computer Science}, 324(2-3):325--335, 2004.

\bibitem{BenderRa00}
M.~A. Bender and M.~O. Rabin.
\newblock Online scheduling of parallel programs on heterogeneous systems with applications to Cilk.
\newblock {\em Theory of Computing Systems}, 35(3):289-304, 2002.

\bibitem{BlumofeLe99}
R.~D. Blumofe and C.~E. Leiserson.
\newblock Scheduling multithreaded computations by work stealing.
\newblock {\em Journal of the ACM}, 46(5):720--748, 1999.

\bibitem{BorodinEl98}
A.~Borodin and R.~El-Yaniv.
\newblock {\em Online computation and competitive analysis}.
\newblock Cambridge University Press, New York, NY, USA, 1998.

\bibitem{BoydVa04}
S.~Boyd and L.~Vandenberghe.
\newblock {\em Convex Optimization}.
\newblock Cambridge University Press, 2004.

\bibitem{BrechtDeGu95}
T.~Brecht, X.~Deng, and N.~Gu.
\newblock Competitive dynamic multiprocessor allocation for parallel
  applications.
\newblock {\em Parallel Processing Letters}, 7(1):89-100, 1997.

\bibitem{BrechtGu96}
T.~Brecht and K.~Guha.
\newblock Using parallel program characteristics in dynamic processor
  allocation policies.
\newblock {\em Performance Evaluation}, 4(27-28):519--539, 1996.

\bibitem{BrooksBoSc00}
D.~M. Brooks, P.~Bose, S.~E. Schuster, H.~Jacobson, P.~N. Kudva,
  A.~Buyuktosunoglu, J.-D. Wellman, V.~Zyuban, M.~Gupta, and P.~W. Cook.
\newblock Power-aware microarchitecture: Design and modeling challenges for
  next-generation microprocessors.
\newblock {\em IEEE Micro}, 20(6):26--44, 2000.

\bibitem{ChanEdLa11}
H.-L. Chan, J.~Edmonds, T.-W. Lam, L.-K. Lee, A.~Marchetti-Spaccamela, and
  K.~Pruhs.
\newblock Nonclairvoyant speed scaling for flow and energy.
\newblock {\em Algorithmica}, 61(3): 507-517, 2011.

\bibitem{ChanEdPr11}
H.-L. Chan, J.~Edmonds, and K.~Pruhs.
\newblock Speed scaling of processes with arbitrary speedup curves on a
  multiprocessor.
\newblock {\em Theory of Computing Systems}, 49(4):817-833, 2011.

\bibitem{ChanLaLe10}
S.-H. Chan, T.-W. Lam, and L.-K. Lee.
\newblock Non-clairvoyant speed scaling for weighted flow time.
\newblock In {\em ESA}, pages 23--35, Liverpool, UK, 2010.

\bibitem{ChanLaLi11}
H.-L. Chan, T.-W. Lam, and R. Li.
\newblock Energy-efficient due date scheduling.
\newblock In {\em TAPAS}, pages 69--80, Rome, Italy, 2011.

\bibitem{Chen04}
Z.-L. Chen.
\newblock Simultaneous job scheduling and resource allocation on parallel
  machines.
\newblock {\em Annals of Operations Research}, 129:135--153, 2004.

\bibitem{DengGuBr00}
X.~Deng, N.~Gu, T.~Brecht, and K.~Lu.
\newblock Preemptive scheduling of parallel jobs on multiprocessors.
\newblock {\em SIAM Journal on Computing}, 30(1):145--160, 2000.

\bibitem{Edmonds00}
J.~Edmonds.
\newblock Scheduling in the dark.
\newblock {\em Theoretical Computer Science}, 235(1):109--141, 2000.

\bibitem{EdmondsChBr03}
J.~Edmonds, D.~D. Chinn, T.~Brecht, and X.~Deng.
\newblock Non-clairvoyant multiprocessor scheduling of jobs with changing
  execution characteristics.
\newblock {\em Journal of Scheduling}, 6(3):231--250, 2003.

\bibitem{EdmondsPr09}
J.~Edmonds and K.~Pruhs.
\newblock Scalably scheduling processes with arbitrary speedup curves.
\newblock {\em ACM Transactions on Algorithms}, 8(3):28, 2012.

\bibitem{GreinerNoSo09}
G.~Greiner, T.~Nonner, and A.~Souza.
\newblock The bell is ringing in speed-scaled multiprocessor scheduling.
\newblock In {\em SPAA}, pages 11--18, Calgary, AB, Canada, 2009.

\bibitem{GrunwaldMoLe01}
D.~Grunwald, I.~Charles B.~Morrey, P.~Levis, M.~Neufeld, and K.~I. Farkas.
\newblock Policies for dynamic clock scheduling.
\newblock In {\em OSDI}, pages 6--6, San Diego, CA, USA, 2000.

\bibitem{HardyLiPo52}
G.~H. Hardy, J.~E. Littlewood, and G.~P\'{o}lya.
\newblock {\em Inequalities}.
\newblock Cambridge University Press, second edition, 1988.

\bibitem{HeHsLe08}
Y.~He, W-J.~Hsu, C. E.~Leiserson.
\newblock Provably efficient online nonclairvoyant adaptive scheduling.
\newblock {\em IEEE Transactions on Parallel and Distributed Systems}, 19(9):1263--1279, 2008.

\bibitem{HerbertMa07}
S.~Herbert and D.~Marculescu.
\newblock Analysis of dynamic voltage/frequency scaling in chip-multiprocessors.
\newblock In {\em ISLPED}, pages 38--43, Portland, OR, USA, 2007.

\bibitem{ImMoPr11}
S.~Im, B.~Moseley, and K.~Pruhs.
\newblock A tutorial on amortized local competitiveness in online scheduling.
\newblock {\em SIGACT News}, 42(2):83--97, 2011.

\bibitem{IraniPr05}
S.~Irani and K.~Pruhs.
\newblock Algorithmic problems in power management.
\newblock {\em SIGACT News}, 36(2):63--76, 2005.

\bibitem{Jaffe80a}
J.~M. Jaffe.
\newblock An analysis of preemptive multiprocessor job scheduling.
\newblock {\em Mathematics of Operations Research}, 5(3):415--421, 1980.

\bibitem{KalyanasundaramPr00}
B.~Kalyanasundaram and K.~Pruhs.
\newblock Speed is as powerful as clairvoyance.
\newblock {\em Journal of the ACM}, 47(4):617--643, 2000.

\bibitem{KimBrWe11}
W.~Kim, and D.~Brooks, and G.-Y. Wei.
\newblock A fully-integrated 3-level DC/DC converter for nanosecond-scale DVS with fast shunt regulation.
\newblock In {\em ISSCC}, pages 268--270, San Francisco, CA, USA, 2011.

\bibitem{KimGuWe08}
W.~Kim, M.~S. Gupta, G.-Y. Wei, and D.~Brooks.
\newblock System level analysis of fast, per-core {DVFS} using on-chip
  switching regulators.
\newblock In {\em HPCA}, pages 123--134, Salt Lake City, UT, USA, 2008.

\bibitem{LamLeTo08c}
T.-W. Lam, L.-K. Lee, I.~K.~K. To, and P.~W.~H. Wong.
\newblock Nonmigratory multiprocessor scheduling for response time and energy.
\newblock {\em IEEE Transactions on Parallel and Distributed Systems},
  19(11):1527--1539, 2008.

\bibitem{LamLeTo08}
T.-W. Lam, L.-K. Lee, I.~K.-K. To, and P.~W.~H. Wong.
\newblock Online speed scaling based on active job count to minimize flow plus energy.
\newblock {\em Algorithmica}, 65(3):605--633, 2013.

\bibitem{Mudge01}
T.~Mudge.
\newblock Power: A first-class architecture design constraint.
\newblock {\em Computer}, 34(4):52--58, 2001.

\bibitem{Pruhs07}
K.~Pruhs.
\newblock Competitive online scheduling for server systems.
\newblock {\em ACM SIGMETRICS Performance Evaluation Review}, 34(4):52--58,
  2007.

\bibitem{PruhsSt10}
K.~Pruhs, and C.~Stein.
\newblock How to schedule when you have to buy your energy.
\newblock In {\em APPROX-RANDOM}, pages 352-365, Barcelona, Spain, 2010.

\bibitem{PruhsVaUt08}
K.~Pruhs, R.~van Stee, and P.~Uthaisombut.
\newblock Speed scaling of tasks with precedence constraints.
\newblock {\em Theory of Computing Systems}, 43(1):67--80, 2008.

\bibitem{RobertSc07}
J.~Robert and N.~Schabanel.
\newblock Non-clairvoyant batch set scheduling: Fairness is fair enough.
\newblock In {\em ESA}, pages 741--753, Eilat, Israel, 2007.

\bibitem{ShmoysTa93}
D.~B. Shmoys and \'{E}va Tardos.
\newblock Scheduling unrelated machines with costs.
\newblock In {\em SODA}, pages 448--454, Austin, Texas, USA, 1993.

\bibitem{SunCaHs09b}
H.~Sun, Y.~Cao, and W.-J. Hsu.
\newblock Non-clairvoyant speed scaling for batched parallel jobs on
  multiprocessors.
\newblock In {\em CF}, pages 99--108, Ischia, Italy, 2009.

\bibitem{SunCaHs11}
H.~Sun, Y.~Cao, and W.-J. Hsu.
\newblock Efficient adaptive scheduling of multiprocessors with stable parallelism feedback.
\newblock {\em IEEE Transactions on Parallel and Distributed Systems}, 22(4):594--607, 2011.

\bibitem{SunHeHs11}
H.~Sun, Y.~He, and W.-J. Hsu.
\newblock Speed scaling for energy and performance with instantaneous parallelism.
\newblock In {\em TAPAS}, pages 240--251, Rome, Italy, 2011.

\bibitem{Trick90}
M.~A. Trick.
\newblock Scheduling multiple variable-speed machines.
\newblock In {\em IPCO}, pages 485--494, Waterloo, Canada, 1990.

\bibitem{WeiserWeDe94}
M.~Weiser, B.~Welch, A.~Demers, and S.~Shenker.
\newblock Scheduling for reduced {CPU} energy.
\newblock In {\em OSDI}, pages 13--23, Monterey, CA, USA, 1994.

\bibitem{YaoDeSh95}
F.~Yao, A.~Demers, and S.~Shenker.
\newblock A scheduling model for reduced {CPU} energy.
\newblock In {\em FOCS}, pages 374--382, Milwaukee, WI, USA, 1995.

\bibitem{ZhangShDw10}
X.~Zhang, K.~Shen, S.~Dwarkadas, and R.~Zhong.
\newblock An evaluation of per-chip nonuniform frequency scaling on multicores.
\newblock In {\em USENIXATC}, pages 19--19, Boston, MA, USA, 2010.

\bibitem{ZhaoJa11}
X.~Zhao and N.~Jamali.
\newblock Fine-grained per-core frequency scheduling for power efficient-multicore execution.
\newblock In {\em IGCC}, pages 1--8, Orlando, FL, USA, 2011.

\end{thebibliography}
\end{document}